\DeclareSIUnit{\belmilliwatt}{Bm}
\DeclareSIUnit{\dBm}{\deci\belmilliwatt}
\theoremstyle{definition}
\newtheorem{theorem}{Theorem}
\newtheorem{lemma}{Lemma}
\newcolumntype{P}[1]{>{\centering\arraybackslash}p{#1}}
\begin{document}
\title{
Constant Modulus Waveform Design with Space-Time Sidelobe Reduction for DFRC Systems
}

\author{
Byunghyun Lee, 
Anindya Bijoy Das, 
David J. Love, 
Christopher G. Brinton, and
James V. Krogmeier,  
\thanks{
A preliminary version of this work was presented at the IEEE International Conference on Communications (ICC), 2024 
\cite{lee2024constant}.
}
\thanks{
This work is supported in part by the National Science Foundation under grants EEC-1941529, CNS-2212565, and CNS-2225578 and the Office of Naval Research under grant N000142112472.
}
\thanks{Byunghyun Lee, David J. Love, Christopher G. Brinton, and James V. Krogmeier are with the Elmore Family School of Electrical and Computer
Engineering, Purdue University, West Lafayette, IN 47907 USA (Email:
\{lee4093, djlove, cgb, jvk\}@purdue.edu).
}
\thanks{Anindya Bijoy Das is with the Department of Electrical and Computer Engineering. The University of Akron, OH 44325 USA (Email: adas@uakron.com).
}
}

\maketitle


\begin{abstract}

Dual-function radar-communication (DFRC) is a key enabler of location-based services for next-generation communication systems.
In this paper, we investigate the problem of designing constant modulus multiple-input multiple-output (MIMO) waveforms for DFRC systems.
We jointly shape the spatial beam pattern and ambiguity function of the transmit space-time matrix to improve target localization accuracy and enhance target resolution in cluttered environments.
For communications, we employ constructive interference (CI)-based precoding, which exploits multi-user and radar-induced interference to enhance MIMO symbol detection.
We develop two novel solution algorithms based on majorization-minimization (MM) and the linearized alternating direction method of multipliers (LADMM) principles.
For the MM approach, we introduce a novel diagonal majorizer for complex quadratic functions, yielding a tighter surrogate and faster convergence than standard largest-eigenvalue-based surrogates.
After majorization, we decompose the approximated problem into independent subproblems that can be efficiently solved via parallelizable coordinate descent.
 To accommodate large MIMO dimensions, we further develop a low-complexity LADMM solution.
We combine a biconvex reformulation and first-order proximal approximations to handle the nonconvex quartic objective without requiring costly matrix inversions.
We evaluate the performance of the proposed algorithms in comparison to the existing DFRC algorithm.
Simulation results demonstrate that the proposed algorithms can substantially enhance target detection and imaging performance due to the reduction of space-time sidelobes.

\end{abstract}

\begin{IEEEkeywords}
dual-function radar-communication (DFRC), multiple-input multiple-output (MIMO), interference exploitation
\end{IEEEkeywords}
\vspace{-1.5mm}

\IEEEpeerreviewmaketitle

\section{Introduction}

{

In the upcoming 6G era, communication and sensing are expected to seamlessly merge within wireless networks, benefiting both functions with improved utility, spectral, and energy efficiency \cite{brinton2024key6g}.
To support this new trend, called integrated sensing and communications (ISAC), standardization bodies such as the Third Generation Partnership Project (3GPP) 
have initiated study items in ISAC for location-based services such as autonomous driving, intelligent factories, and military surveillance \cite{kaushik2024toward}.
ISAC evolved from spectrum sharing between radar and communications 
to a tighter integration, known as dual-function radar-communication (DFRC), which shares both spectrum and hardware \cite{liu2018toward,liuCramerRaoBoundOptimization2022,yuPrecodingApproachDualFunctional2022}.
 
In this paper, we address the problem of designing a constant modulus probing signal for detecting radar targets while conveying information bits to communication receivers simultaneously.
In radar-centric DFRC systems, high transmit power is typically required to enable precise and reliable target detection and parameter estimation.
However, waveforms with a high peak-to-average power ratio (PAPR) force high-power amplifiers (HPAs) to operate in the nonlinear region, creating signal distortions that destroy the waveform's intended spatial and temporal properties \cite{levanon2004radar,10547068} or requiring significant power back-off.
Therefore, it is crucial to design constant modulus waveforms to maintain the efficiency of HPAs and prevent such distortion.
Several DFRC works have investigated the problem of designing constant modulus waveforms 
\cite{liu2018toward,wu2025quantized,liuDualFunctionalRadarCommunicationWaveform2021,liu2022joint}, while \cite{bazziIntegratedSensingCommunication2023} considered explicit PAPR constraints.

Much existing work on DFRC probing signal design has focused on spatial beam pattern shaping to obtain a strong target response by concentrating energy in the search directions while suppressing sidelobes in undesired directions \cite{lee2024spatial,liuJointTransmitBeamforming2020a,liuDualFunctionalRadarCommunicationWaveform2021,liu2018mu,liu2022joint}.
While beam shaping remains crucial, the waveform's ambiguity characteristics over a coherent processing interval (CPI) are equally critical as they determine angle-delay resolution and interference in cluttered and dense scenes (e.g., low-altitude drone tracking, automotive radar).
This is particularly important in multiple-input multiple-output (MIMO) radar, where target echoes from multiple simultaneous beams may interfere with each other unless their correlations are carefully controlled.

To shape the waveform's correlation profile, similarity metrics have been widely adopted \cite{liu2018toward,bazziIntegratedSensingCommunication2023,qian2018joint,liu2022joint}.
The similarity metric approach forces the designed waveform to remain close to a reference waveform (e.g., chirp), constraining the designed signal to preserve the reference's desirable correlation properties.
While convenient, this approach is inherently suboptimal and offers limited control over correlations at specific angle or range cells due to the dependency on the reference waveform.
Direct correlation optimization approaches address this limitation by minimizing the integrated or peak sidelobe level
\cite{liuJointTransmitBeamforming2020a,liuRangeSidelobeReduction2020,dokhanchiAdaptiveWaveformDesign2021,yuIntegratedWaveformDesign2022,liMIMOOFDMISACWaveform2025}, yet most works ignore the spatial beam pattern aspect.
To capture both spatial and correlation aspects, past radar works \cite{stoica2007probing,wangDesignConstantModulus2019,wang2012design} have studied a trade-off design between beam pattern and correlation.
Despite the benefit of flexibly tuning space-time correlation levels, its application has been limited in the DFRC context.


From a communication perspective, DFRC systems must cope with strong interference from both radar transmissions and spatial multiplexing.
Existing DFRC works treat the radar signal as detrimental interference to be suppressed \cite{bazziIntegratedSensingCommunication2023 ,liu2018toward,liuRangeSidelobeReduction2020}.
However, it is challenging to suppress such large radar signals, particularly in radar-centric DFRC systems with high transmit power.
In this context, constructive interference (CI) precoding has emerged as a promising alternative.
Unlike traditional precoding schemes, CI-based precoding reshapes interference so that it contributes constructively to the communication symbol energy \cite{li2020tutorial}.
Some DFRC works have employed CI-based precoding in probing signal design, focusing on matching the desired beam pattern  \cite{liuDualFunctionalRadarCommunicationWaveform2021,wu2025quantized} and meeting the minimum radiation power toward targets \cite{wangSymbolScalingBasedInterference2025,wangISACEnhancementInterference2025}.
However, these works primarily consider waveform's spatial aspects without accounting for waveform's correlation properties.

To overcome the limitations of the existing works, we design constant-modulus probing signals for radar-centric DFRC systems.
We jointly optimize the spatial beam pattern and the space-time correlations based on the \textit{MIMO ambiguity function} \cite{duly2013time,san2007mimo}.
For communications, we employ CI-based precoding to embed information bits into dual-function signals and efficiently enhance communication symbol energy by leveraging distortion due to radar transmission and multi-user interference.
The resulting optimization is challenging due to its nonconvexity and high-dimensional optimization over the transmit MIMO space-time matrix for space-time sidelobe shaping, whose dimension scales with the array size and the CPI length.
To tackle such challenges, we develop two efficient solution algorithms based on majorization-minimization (MM) and the linearized alternating direction method of multipliers (LADMM). 
Compared to our preliminary work \cite{lee2024constant}, we introduce an additional LADMM algorithm designed for large-scale scenarios, which provides substantial complexity benefits with a modest performance trade-off.
Additionally, we further improve our MM algorithm via parallelization while preserving its monotonic descent property. 
Our contributions can be summarized as follows.
\begin{itemize}
    
    \item 
    We formulate a joint beam shaping and space-time sidelobe suppression problem under a constant modulus constraint for DFRC systems.
    For communication, we employ CI-based precoding to leverage CI from multiuser and radar transmission and improve the sensing-communication trade-off by permitting interference in the constructive direction.
    
    \item 
    We develop an MM-based solution that transforms the original problem into a set of independent linear subproblems with constant modulus constraints.
    We propose a novel surrogate function for quadratic objectives associated with a Hermitian matrix, which outperforms conventional largest eigenvalue-based surrogate functions.
    The approximated problem satisfies strong duality, and thus we alternatively solve the dual problem via a parallelizable coordinate-descent method.
 This algorithm guarantees a monotonically decreasing objective function and inherently provides parallel processing capabilities.
    
\item We develop a low-complexity LADMM algorithm to handle larger matrix dimensions.
By using the variable splitting technique, we decompose the formulated problem into multiple tractable subproblems that admit simple closed-form solutions.
To address the quartic objective, we employ a biconvex formulation combined with proximal updates, thereby eliminating the need for costly matrix inversions. 
Additionally, we exploit the fast Fourier transform (FFT) to accelerate gradient computations.
      
    \item
    Finally, we conduct a series of numerical simulations to evaluate the proposed algorithms and verify their effectiveness in comparison to the existing method \cite{liuDualFunctionalRadarCommunicationWaveform2021}.
    Specifically, we assess the detection and imaging performance of the proposed waveforms.
    
\end{itemize}

The rest of the paper is organized as follows. 
In Sec. II, we provide the system model including the radar and communication models, and formalize our waveform design problem.
Then, in Sec. III and Sec. IV, we develop our MM and LADMM solutions, respectively.
In Sec. V, we evaluate the performances of our proposed algorithms in comparison with the baseline algorithm, and finally, we conclude the paper in Sec. VI.

\textbf{Notation}:
Vectors and matrices are denoted by boldface lowercase and uppercase letters, respectively.
$(\cdot)^T$, $(\cdot)^*$, $(\cdot)^H$, and $(\cdot)^{-1}$ are the transpose, conjugate, conjugate transpose, and inverse operators, respectively.
$|\cdot|$ and $\Vert \cdot\Vert$ denotes the absolute and 2-norm operators, respectively.
$\text{diag}(\cdot)$ is the diagonal matrix, with diagonal entries consisting of the input vector.
$\text{vec}(\cdot)$ is the vectorization of a matrix, while
$\text{mat}_{}(\cdot)$ reshapes a vector into a matrix.
$\text{Tr}(\cdot)$ is the trace of a matrix.
$\mathbb{E}[\cdot]$ is the expectation operator.
$\mathbb{I}$ is the indicator function.
$\odot$ denotes the Hadamard product.
$\circledast$ denotes the linear convolution operator.
${Q}_{i,j}$ denotes the $(i,j)$th entry of a matrix $\textbf{Q}$.
$\textbf{0}$, $\textbf{1}$, and $\textbf{I}$ represent the all-zeros, all-ones, and identity matrices, respectively.
$\angle$ is the phase of a complex number.
$\otimes$ denotes the Kronecker product.
$\nabla_{}$ denotes the gradient operation.
$[\bm{x},\bm{y}]^{(i)}$ denotes $[\bm{x}^{(i)},\bm{y}^{(i)}]$.

\vspace{0.1 in}
\section{System Model and Problem Formulation}
\subsection{System Setup}
Consider a downlink narrowband DFRC system where a base station (BS) operates as a multi-user MIMO transmitter and collocated MIMO radar simultaneously, as depicted in Fig. \ref{fig:system}.
The BS is equipped with transmit and receive arrays of $N_T$ and $N_R$ antennas, respectively.
Without loss of generality, we consider a uniform linear array (ULA) for both the transmit and receive arrays.
The primary function of the considered system is radar sensing, while the secondary function is communication.
To accomplish the dual functions of radar and communication, this paper focuses on downlink transmission, where the BS transmits a discrete-time waveform matrix $\textbf{X}\in \mathbb{C}^{N_T \times L}$ in each transmission block.
The waveform matrix $\textbf{X}$ can be seen as a train of subpulses containing communication information.
The $(n,\ell)$th entry $X_{n,\ell}$ of $\textbf{X}$ represents the $\ell$th radar subpulse and $\ell$th discrete-time transmit symbol of $L$ total for the $n$th transmit antenna.

\begin{figure}[!t]
\center{\includegraphics[width=.72\linewidth]{./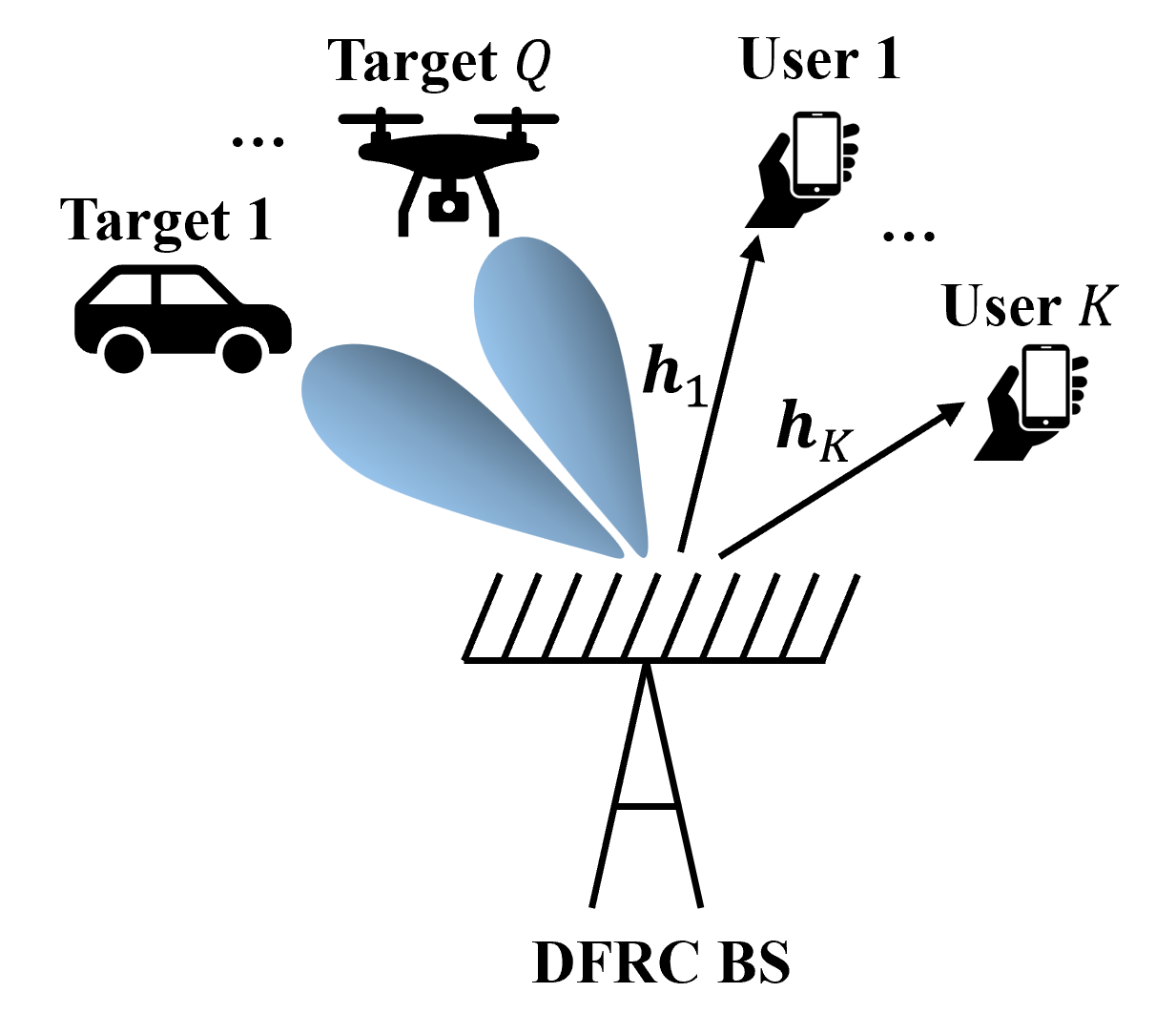}
}
\caption{\small Illustration of a DFRC system. 
}
\label{fig:system}
\end{figure}

{
\subsection{Radar Model}
Consider $Q$ far-field point targets at azimuth angles $\theta_1, \dots, \theta_Q$ and range bins $\tau_1, \dots, \tau_Q$.
To detect the targets, the BS collects reflected signals using the receive antennas.
The received echo signal
at the BS is given by \cite{hua2012receiver,yu2020mimo} \vspace{-2mm}
\begin{equation}
     \textbf{Z}=
    \displaystyle\sum_{q=1}^Q\kappa_q {\textbf{b}(\theta_q)\textbf{a}^H(\theta_q)}{\textbf{X}}\textbf{J}_{\tau_q-\tau_1}+\textbf{W}, \vspace{-2mm}
\end{equation}
where 
$\kappa_q \in\mathbb{C}$ is the complex amplitude proportional to the radar cross-section (RCS) of target $q$, $\textbf{a}(\cdot)\in\mathbb{C}^{N_T}$ is the steering vector of the transmit array, $\textbf{b}(\cdot)\in\mathbb{C}^{N_R}$ is the steering vector of the receive arrays, $\textbf{J}_{\tau_q-\tau_1}\in \mathbb{R}^{L\times L}$ is the shift matrix for target $q$, 
and 
$\textbf{W}\in \mathbb{C}^{N_R \times L}$ is independent and identically distributed (i.i.d.) noise drawn from $\mathcal{CN}(0,\sigma_r^2)$.
The shift matrix accounts for the round-trip delay between the BS and a target, which is given by \cite{horn2012matrix} 
\begin{equation}
    [\textbf{J}_{\tau}]_{i,j} = \begin{cases}
        1, & \text{if }j-i=\tau \\
        0, & \text{otherwise}.
    \end{cases} 
\end{equation}
where $\tau$ is the time shift.

For simplicity, we assume zero-Doppler targets and clutter objects.
Despite this assumption, the extension of our approach to the non-zero Doppler case is straightforward.
}


\subsubsection{Beam Pattern Shaping Cost}

In radar waveform design, it is essential to maximize the mainlobe power directed toward targets while minimizing sidelobes.
This strategy ensures strong return signals from the targets and suppresses undesired signals caused by clutter.
Given the waveform $\textbf{X}$, the beam pattern at angle $\theta$ is given by
${G}(\textbf{X},\theta)= \Vert \textbf{a}^H(\theta)\textbf{X} \Vert ^2=\textbf{a}^H(\theta)\textbf{X}\textbf{X}^H\textbf{a}(\theta)$
, 
where 
$\textbf{a}(\theta)\in\mathbb{C}^{N_T}$ is the steering vector of the transmit array \cite{li2008mimo}.
The beam pattern can be expressed in vector form as 
${{G}}(\bm{x},\theta)
    =\Vert(\textbf{I}_L\otimes\textbf{a}^H(\theta))\bm{x}\Vert^2
    =\bm{x}^H\textbf{A}(\theta)\bm{x}$
where $\textbf{A}(\theta)=\textbf{I}_L \otimes\textbf{a}(\theta)\textbf{a}^H(\theta)$ and $\bm{x}=\text{vec}(\textbf{X})$.
To obtain the desired properties, we minimize the mean square error (MSE) between the ideal beam pattern and the actual beam pattern, which can be expressed as \vspace{-1.5mm}
\begin{equation}
    {g}_{bp}(\alpha,\bm{x})=\displaystyle\sum_{u=1}^U| \alpha G_{d,u}-{G}(\bm{x},\theta_u)|^2,
\end{equation}
where $U$ is the number of angle bins, $\alpha\geq 0$ is the scaling coefficient, and $G_{d,u}$ is the desired beam pattern at angle $\theta_u$.
Here, we have approximated the beam pattern MSE with a finite number $U$ of angle bins.
The scaling coefficient $\alpha$ adjusts the amplitude of the beam pattern that varies according to the BS transmit power.
Given the available closed-form solution to $\alpha$, the beam pattern shaping cost can be expressed in compact vector form as \cite{lee2024constant,liuDualFunctionalRadarCommunicationWaveform2021}
\begin{equation}\label{eq:obj_bp}
  \tilde{g}_{bp}(\bm{x})=\sum_{u=1}^U|\bm{x}^H\textbf{B}_u\bm{x}|^2,  
\end{equation}
where $\textbf{B}_u \triangleq \left({{G_{d,u}\sum\limits_{u'=1}^U\textbf{A}_{}(\theta_{u'})G_{d,u'}}}\right)/{\sum\limits_{u'=1}^UG_{d,u'}^2}-\textbf{A}(\theta_{u})$.


\subsubsection{Space-Time Autocorrelation and Cross-Correlation Integrated Sidelobe Levels (ISLs)}
Since the ambiguity of a radar waveform has a significant impact on parameter estimation quality \cite{duly2013time,san2007mimo}, it is critical to address its ambiguity characteristics.
We consider the space-time correlation function to quantify such ambiguity in the radar waveform.
The space-time correlation function is defined as the correlation between a radar waveform and its echoes reflected from different angles and range bins \cite{wang2012design,wangDesignConstantModulus2019}, which is given by \vspace{-1.5mm}
\begin{equation}\label{eq:space_time_corr}
\chi_{\tau,{q},{q'}}(\textbf{X})=|\textbf{a}^H(\theta_q)\textbf{X}\textbf{J}_{\tau}\textbf{X}^H\textbf{a}(\theta_{q'})|^2.
\end{equation}
{
For a given parameter set $(\tau,q,q')$, the space-time correlation function \eqref{eq:space_time_corr} describes the correlation between angles $\theta_q$ and $\theta_{q'}$ at a range bin $\tau$.
The components $\textbf{a}^H(\theta_{q})\textbf{X}$ and $\textbf{a}^H(\theta_{q'})\textbf{X}$ represent the waveforms radiated toward angles $\theta_{q}$ and $\theta_{q'}$, respectively, and $\textbf{J}_{\tau}$ applies the time delay difference $\tau$ of two return signals.
}
The space-time correlation function can be rewritten in vector form as  
$\chi_{\tau,q,q'}(\bm{x}) = |\bm{x}^H \textbf{D}_{\tau,q,q'}\bm{x}|^2$
where $\textbf{D}_{\tau,q,q'}=\textbf{J}_{-\tau}\otimes \textbf{a}(\theta_{q'})\textbf{a}^H(\theta_{q})$
(See Appendix \ref{sec:appendix_B} for details).
When $q=q'$, the space-time correlation function represents the autocorrelation properties at angle $\theta_q$.
Then, the autocorrelation ISL can be obtained as
\begin{equation}\label{eq:obj_ac}
    \begin{aligned}
         g_{ac}(\bm{x})&=\displaystyle\sum_{q=1}^Q\displaystyle\sum_{\substack{\tau=-P+1, \\ \tau\neq 0}}^{P-1} 
        \chi_{\tau,q,q}(\bm{x}),
    \end{aligned}
\end{equation}
where $Q$ is the number of target directions of interest and $P$ is the maximum round-trip delay\footnote{
\color{black}
The choice of the parameter $P$ is application-specific.
In case when $P=L+1$, all range bins are suppressed, whereas when $P \ll L+1$, partial range bins near zero-delay are suppressed.} of interest with $P-1\leq L$.
On the other hand, when $q\neq q'$, the space-time correlation function $\chi_{\tau,q,q'}$ represents the cross-correlation properties between angles $\theta_q$ and $\theta_{q'}$ at a range bin $\tau$.
The cross-correlation ISL is given by
\begin{equation}\label{eq:obj_cc}
    \begin{aligned}
        g_{cc}(\bm{x})=
    \displaystyle\sum_{q=1}^{Q-1}\displaystyle\sum_{\substack{q'=1, \\ q'\neq q}}^Q\displaystyle\sum_{\tau=-P+1}^{P-1}
    \chi_{\tau,q,q'}(\bm{x}).
    \end{aligned}
\end{equation}
{


}


\subsection{Communication Model and QoS Constraint}

Consider multi-user MIMO transmission where the BS serves $K$ single antenna users simultaneously, i.e., $N_T \geq K$.
We adopt a block-fading channel model where the communication channels remain the same within a transmission block.
The $\ell$th received symbol at user $k$ can be written as 
${y}_{\ell,k}=\textbf{h}_k^H\bm{x}_{\ell}+{n}_{\ell,k},$
where $\bm{x}_{\ell}$ is the $\ell$th column of $\textbf{X}$ containing the $\ell$th communication symbol and $\ell$th radar subpulse, $\textbf{h}_k\in \mathbb{C}^{N_T}$ is the channel from the BS to user $k$, and ${n}_{\ell,k}\in \mathbb{C}$ is Gaussian noise with ${n}_{\ell,k} \sim \mathcal{CN}(0,\sigma^2)$.
We assume the BS has perfect knowledge of the user channels $\textbf{h}_k\in \mathbb{C}^{N_T}$ for $k=1, \dots, K$.
The codeword for user $k$ is given by $\textbf{s}_k=[s_{1,k}, \dots, s_{L,k}]^T\in \mathbb{C}^{L}$ where each symbol $s_{\ell,k}$ is drawn from a constellation $\mathcal{S}$.
{
While data symbols are randomly drawn from a known constellation, the BS knows their specific realizations and uses this knowledge in the precoder design.
}
In what follows, we detail the relationship between the desired codewords $\textbf{s}_1,\textbf{s}_2,\dots,\textbf{s}_K$ and the transmit signal $\textbf{X}$.

\subsubsection*{Per-User Communication QoS Constraint}

To ensure a baseline quality of service (QoS) for the communication users, we consider CI-based precoding to exploit the distortion induced by MU-MIMO and radar transmission.
CI refers to an unintended signal that moves the precoded symbol farther away from its corresponding decision boundaries in the constructive direction.
Unlike conventional precoding that eliminates distortion, CI-based precoding permits interference in the constructive direction, thereby allowing a wider set of feasible solutions.

{
\color{black}
We now derive the CI constraints for given user channels and data symbols, to ensure precoded symbols fall into their respective CI regions.
This paper focuses on the $M$-phase shift keying\footnote{
Although the main focus of this paper is the PSK scenario, it is possible to extend it to quadrature amplitude modulation (QAM), as shown in \cite{li2022practical}.
} (M-PSK) constellation, where $M=4$, i.e., quadrature-PSK (QPSK).
The CI region for each QPSK symbol is determined by the SNR threshold and its boundaries.
The SNR threshold $\gamma_k$ is selected to meet the minimum SNR requirement of the users and defines the distance $|\overrightarrow{OA}|$ between the origin and the desired symbol $A$.
The boundaries of the CI region must be parallel with the decision boundaries and intersect at $A$.
For example, Fig. \ref{fig:CIR} shows the CI region of the QPSK symbol in the first quadrant and its associated boundaries.
The CI direction refers to any direction in which the precoded symbol moves farther away from the boundaries of the CI region.
Vector $\overrightarrow{AC}$ represents interference due to multi-user and radar transmission.
$\overrightarrow{OC}=\overrightarrow{OA}+\overrightarrow{AC}$ corresponds to the noise-less precoded symbol.

From the geometry, it can be observed that $\overrightarrow{OC}$ falls into the CI region if the distortion $\overrightarrow{AC}$ is in the CI direction.
This holds if and only if 
$|\overrightarrow{BD}|\geq|\overrightarrow{BC}|$ where point $B$ is the projection of point $C$ onto the line at an angle $\Lambda$ to the boundaries of the CI region.
}
The length $|\overrightarrow{BC}|$ can be expressed as $|\overrightarrow{BC}|=|\Im\{\textbf{h}^H_k\bm{x}_{\ell}e^{-j\angle s_{\ell,k}}\}|$, while the length $|\overrightarrow{BD}|$ can be obtained as
$|\overrightarrow{BD}|=|\overrightarrow{AB}|\tan \Lambda=\Re\{\textbf{h}^H_k\bm{x}_{\ell}e^{-j\angle s_{\ell,k}}-\sigma \sqrt{\gamma_k}\}\tan \Lambda$.
\color{black}

Combining the above results, the CI constraint for $\ell$th symbol of user $k$ can be formulated as \cite{liInterferenceExploitationPrecoding2018}
\begin{align*}\label{ineq:comm_constraint}
    \Re&\{\textbf{h}^H_k\bm{x}_{\ell}e^{-j\angle s_{\ell,k}}-\sigma \sqrt{\gamma_k}\}\tan \Lambda -|\Im\{\textbf{h}^H_k\bm{x}_{\ell}e^{-j\angle s_{\ell,k}}\}| \geq 0.
\end{align*}
{\color{black}
The above inequality incorporates phase information of data symbols, which is used for nonlinear mapping from $s_{\ell,k}$ to $\bm{x}_{\ell}$. 
}
The above CI constraint can be transformed into \cite{liuDualFunctionalRadarCommunicationWaveform2021}
\begin{equation}\label{ineq:CI}
    \Re\{\hat{\textbf{h}}_{\ell,m}^H\bm{x}_{\ell}\}\geq \Gamma_{m},
    \ \forall \ell=1,2,\dots,L, \ \forall m=1,2,\dots,2K,  
\end{equation}
where 
\begin{align*}
   \hat{\textbf{h}}^H_{\ell,2k}& \triangleq  \textbf{h}_k^He^{-j\angle s_{\ell,k}}(\sin \Lambda -j\cos \Lambda), \\
    \hat{\textbf{h}}^H_{\ell,2k-1}& \triangleq  \textbf{h}_k^He^{-j\angle s_{\ell,k}}(\sin \Lambda +j\cos \Lambda), \; \textrm{and}\\
    \Gamma_{2k}  &\triangleq \sigma \sqrt{\gamma_k}\sin \Lambda,\Gamma_{2k-1} \triangleq \sigma \sqrt{\gamma_k}\sin \Lambda.
\end{align*}

\begin{figure}[!t]
\center{\includegraphics[width=.74\linewidth]{./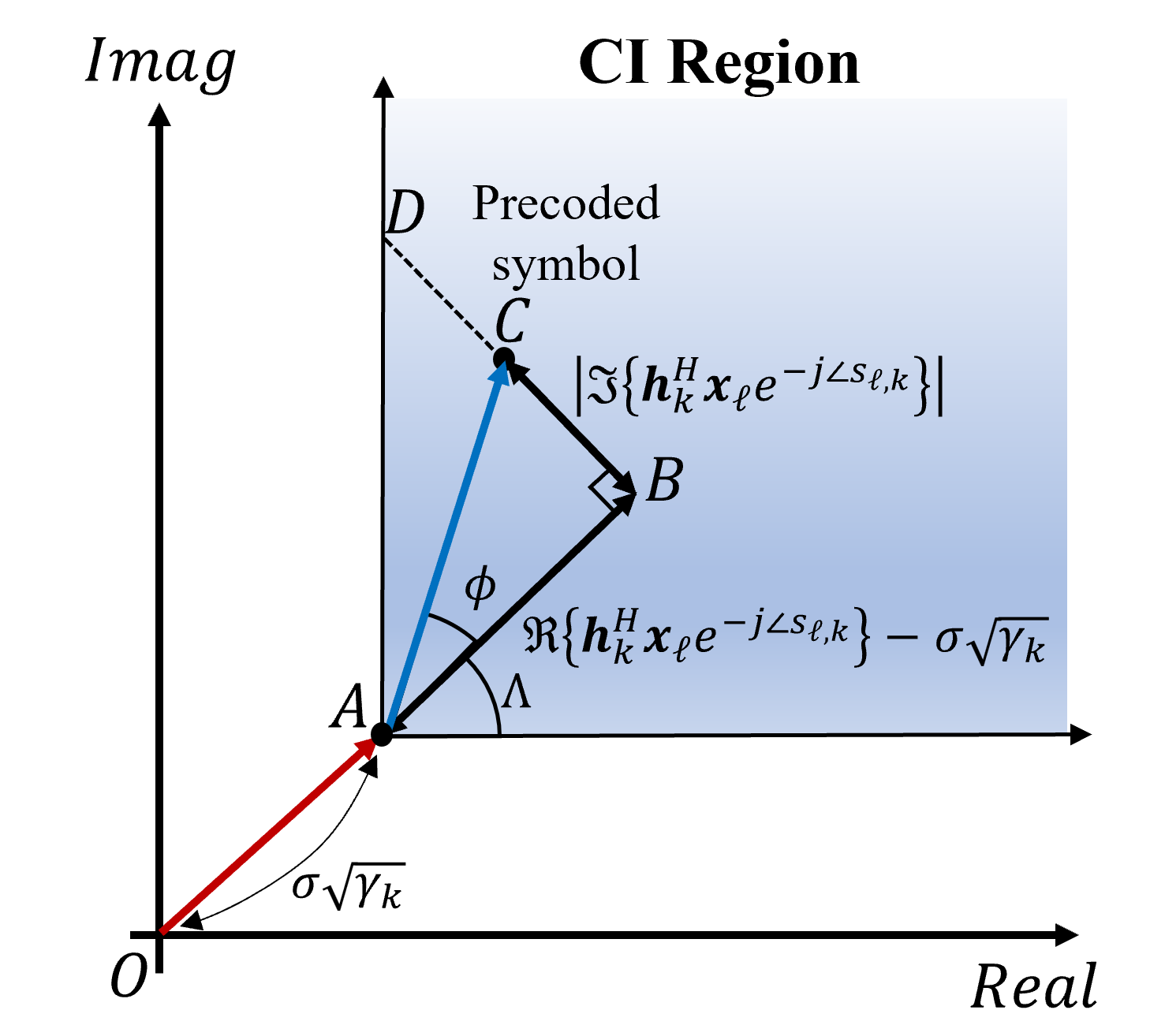}
}
\caption{\small  Constructive interference (CI) region.
The $\ell$th noiseless received symbol $\textbf{h}_{k}^H\bm{x}_{\ell}$ for user $k$ lies within the CI region if the inequality 
$|\overrightarrow{BD}|\geq|\overrightarrow{BC}|$ holds.
}
\label{fig:CIR}
\end{figure}

Due to the limited space, we refer the readers to {\cite{liInterferenceExploitationPrecoding2018,liuDualFunctionalRadarCommunicationWaveform2021}} for a detailed derivation.
With this, the CI constraint can be reformulated with respect to vector $\bm{x}$ as $    \Re\{\tilde{\textbf{h}}^H_{\ell,m}\bm{x}\} \geq \Gamma_m, $
where $\tilde{\textbf{h}}^H_{\ell,m} \triangleq  \textbf{e}^T_{\ell,L} \otimes \hat{\textbf{h}}^H_{\ell,m}$ and $\textbf{e}^T_{\ell,L}$ denotes the $\ell$th column of the $L \times L $ identity matrix.


\subsection{Problem Formulation}
{

Our primary goal is to flexibly design the trade-off between the beam pattern and waveform correlations while meeting the communication users' QoS requirements.
The beam shaping cost function $\tilde{g}_{bp}(\bm{x})$ aims to maximize the strength of a target return while minimizing spatial sidelobes in the undesired directions.
In the meantime, the ISL cost functions control the waveform's overall correlation level, reducing interference between different objects and improving their separation.
For communication, we impose the CI constraint to meet the QoS requirement while relaxing the feasible region relative to interference-suppression constraints.
By combining these sensing and communication design goals, the waveform design problem is formulated as
}
\begin{equation}\label{eq:prob_formulation1}
\begin{aligned}
& \underset{\bm{x}}{\min}
& &  \omega_{bp}\tilde{g}_{bp}(\bm{x})+\omega_{ac} g_{ac}(\bm{x})+\omega_{cc} g_{cc}(\bm{x})  \\ 
& \text{s.t.}
& & \textbf{C1}: \Re\{\tilde{\textbf{h}}^H_{\ell,m}\bm{x}\} \geq \Gamma_m, \forall \ell,m \\
& & & \textbf{C2}: | {x}_{n} | = \sqrt{\frac{P_T}{N_T}}, \ \forall n=1,2,\dots,LN_T \\
\end{aligned}
\end{equation}
where $\omega_{bp}, \omega_{ac}, \omega_{cc} \geq 0$ are the weights for the beam pattern shaping cost \eqref{eq:obj_bp}, autocorrelation ISL \eqref{eq:obj_ac} and cross-correlation ISL \eqref{eq:obj_cc}, respectively. $\textbf{C1}$ is the communication QoS constraint, and $\textbf{C2}$ is the constant modulus constraint.
By normalizing the constant modulus constraint, we can reformulate the above problem as
\begin{equation}\label{eq:prob_formulation2}
\begin{aligned}
& \underset{\bm{x}}{\min}
& &  \omega_{bp}\tilde{g}_{bp}(\bm{x})+\omega_{ac} g_{ac}(\bm{x})+\omega_{cc} g_{cc}(\bm{x})  \\ 
& \text{s.t.}
& & \textbf{C1}: \Re\{\tilde{\textbf{h}}^H_{\ell,m}\bm{x}\} \geq \tilde{\Gamma}_m, \forall \ell,m\\
& & & \textbf{C2}: | {x}_{n} | = 1, \ \forall n=1,2,\dots,LN_T 
\end{aligned}
\end{equation}
where $\tilde{\Gamma}_m=\sqrt{\frac{N_T}{P_T}}{\Gamma_m}$. 
\begin{theorem}\label{theorem:nonconvex}
Problem \eqref{eq:prob_formulation2} is nonconvex.
\end{theorem}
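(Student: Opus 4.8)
The plan is to establish nonconvexity by showing that the feasible set of \eqref{eq:prob_formulation2} is not convex; since every convex optimization problem must have a convex feasible region, this alone settles the claim, independent of the objective. As a first step I would separate the two constraint types. Constraint \textbf{C1} consists of finitely many inequalities $\Re\{\tilde{\textbf{h}}^H_{\ell,m}\bm{x}\}\ge\tilde{\Gamma}_m$; writing $\bm{x}=\bm{x}_R+j\bm{x}_I$, each left-hand side is an \emph{affine} (indeed linear) function of $(\bm{x}_R,\bm{x}_I)$, so \textbf{C1} defines an intersection of half-spaces, i.e.\ a convex polyhedron. Hence \textbf{C1} cannot be the source of nonconvexity, and I would localize the argument to the constant modulus constraint \textbf{C2}.

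The core step is a midpoint argument against \textbf{C2}. The key observation is that for \emph{any} two distinct feasible points $\bm{x}_1\neq\bm{x}_2$ there is at least one index $n$ at which they differ, say $x_{1,n}=e^{j\alpha}$ and $x_{2,n}=e^{j\beta}$ with $\alpha\neq\beta$ (modulo $2\pi$). The corresponding entry of the midpoint then obeys $\big|\tfrac{1}{2}(x_{1,n}+x_{2,n})\big|=\big|\cos\tfrac{\alpha-\beta}{2}\big|<1$, so the midpoint violates \textbf{C2} and lies outside the feasible set. This shows that no two distinct feasible points can be joined by a feasible segment, so the feasible set is nonconvex and therefore \eqref{eq:prob_formulation2} is a nonconvex problem. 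Note that this argument is essentially instance-independent and does not even require knowing how \textbf{C1} interacts with the unit circles.

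The main obstacle is the degenerate edge case hidden in the phrase ``two distinct feasible points'': the argument above is vacuous if the feasible set is empty or a singleton, in which case the set would be trivially convex. I would dispose of this by invoking the standing assumption that the instance is feasible and nondegenerate, and by noting that each per-symbol CI constraint in \textbf{C1} only cuts the unit circle of the relevant coordinates by a half-plane, leaving a continuum of admissible phases and hence at least two distinct feasible vectors. If one wished to avoid any feasibility assumption entirely, I would instead argue directly that the objective is nonconvex: each summand $|\bm{x}^H\textbf{B}_u\bm{x}|^2$ (and likewise the $g_{ac},g_{cc}$ terms built from $\textbf{D}_{\tau,q,q'}$) is a degree-four form whose restriction to a suitable line $\bm{x}_0+t\bm{v}$ reduces to a quartic in $t$ with an indefinite second derivative near the origin, certifying that the objective itself is nonconvex.
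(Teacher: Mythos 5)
Your proof is correct, and it rests on the same underlying fact as the paper's proof -- the feasible set is nonconvex because of the constant modulus constraint \textbf{C2} -- but the mechanism is genuinely different and arguably cleaner. The paper lifts the problem to real coordinates $\bar{\bm{x}}_\ell=[\Re\{\bm{x}_\ell^T\},\Im\{\bm{x}_\ell^T\}]^T$ and argues geometrically: in each $(n,n+LN_T)$ coordinate pair the modulus constraint traces a unit circle, the linear CI constraints cut out a polygon, and their intersection is an arc, hence nonconvex. Your midpoint computation -- two distinct unimodular entries $e^{j\alpha},e^{j\beta}$ average to a point of modulus $\bigl|\cos\tfrac{\alpha-\beta}{2}\bigr|<1$ -- proves the stronger statement that \emph{no} two distinct feasible points have a feasible midpoint, entirely independently of how \textbf{C1} interacts with the circles. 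This buys you two things: (i) it sidesteps the paper's informal step (the ``polygon in the same coordinates'' claim glosses over the fact that each CI constraint couples all $N_T$ antenna coordinates of $\bm{x}_\ell$, so it is a half-space in the full real space, not a half-plane per coordinate pair); and (ii) it makes explicit the degenerate case that the paper silently ignores -- both arguments are vacuous if the feasible set is empty or a singleton (the paper's ``arc'' could equally degenerate to a point), and your feasibility/nondegeneracy caveat is the honest way to close that gap. One caution: your fallback route via nonconvexity of the objective is only a sketch and would not stand alone as written. Restricted to any line through the origin, each quartic term becomes $ct^4$ with $c\geq 0$, which is convex, so the ``suitable line'' $\bm{x}_0+t\bm{v}$ must be anchored away from the origin, and the claimed indefiniteness must actually be verified for the specific matrices -- e.g., $(\bm{x}^H\textbf{B}_u\bm{x})^2$ is convex whenever $\textbf{B}_u\succeq 0$, so the argument hinges on indefiniteness of the $\textbf{B}_u$ (and the structure of the $\textbf{D}_{\tau,q,q'}$), which you do not establish. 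Stick with the feasible-set argument as your primary proof.
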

See Appendix \ref{sec:appendix_C} for the proof.

Our per-block design jointly shapes the space-time properties of a constant modulus transmit block, providing improved flexibility and trade-offs over indirect similarity-based schemes \cite{liu2018mu,liuJointTransmitBeamforming2020a,liuDualFunctionalRadarCommunicationWaveform2021,liu2022joint,lee2024spatial}.
This approach is particularly effective in dense sensing environments (e.g.,  automotive radar), where multiple closely spaced targets and strong clutter require fine-grained control of correlations.
The weight parameters can be selected depending on the scenario.
For example, a higher $\omega_{bp}$ value can be used to boost the SNR of a weak target, whereas $\omega_{ac}$ and $\omega_{cc}$ can be emphasized under cluttered environments to suppress interference.

The formulated problem is inherently nonconvex due to the nonconvex quartic objective and constant modulus constraint.
Moreover, optimizing the full space-time matrix becomes challenging at large array sizes and block lengths.
In the subsequent sections, we develop two solution algorithms: an MM method that yields monotone descent and admits parallel updates, and a low-complexity LADMM method that can handle larger dimensions via proximal updates and simple projections.

\vspace{-1mm}




\section{MM Algorithm}\label{sec:MM}

In this section, we introduce our solution that leverages the majorization-minimization (MM) technique and the method of Lagrange multipliers.
We first derive a linear majorizer function of the fourth-order objective in \eqref{eq:prob_formulation2} to handle its nonconvexity.
We propose a novel diagonal majorizer designed for complex Hermitian matrices. 
This approach yields a tighter surrogate than conventional largest-eigenvalue-based methods \cite{liFastAlgorithmsDesigning2018,liuDualFunctionalRadarCommunicationWaveform2021,zhao2016unified,sun2016majorization}, particularly for ill-conditioned matrices, thereby significantly accelerating convergence.
Under this majorization, the problem \eqref{eq:prob_formulation2} can be approximated as a linear program with a constant modulus constraint.
We then decompose this approximation into multiple independent subproblems.
Since strong duality holds for these subproblems, they can be solved efficiently and in parallel using coordinate descent.
In the following, we describe the majorization process of \eqref{eq:prob_formulation2} and the solution based on dual problems.

\subsection{Majorizing with an Improved Majorizer}


To majorize the objective, we begin by rewriting the quadratic term in the beam pattern shaping cost as $\bm{x}^H\textbf{B}_u\bm{x}=\text{Tr}(\bm{x}\bm{x}^H\textbf{B}_u)=\text{vec}^H(\bm{x}\bm{x}^H)\text{vec}(\textbf{B}_u)$ \cite{liFastAlgorithmsDesigning2018}.
Then, following the prevalent approach used in \cite{liFastAlgorithmsDesigning2018,liuDualFunctionalRadarCommunicationWaveform2021,zhao2016unified,sun2016majorization}, the fourth-order beam pattern shaping cost can be expressed as 
$\sum_{u=1}^{U}|\bm{x}^H\textbf{B}_u\bm{x}|^2= \text{vec}^H(\bm{x}\bm{x}^H){\bm{\Psi}}_1\text{vec}(\bm{x}\bm{x}^H),$
where
${\bm{\Psi}}_1\triangleq\sum_{u=1}^{U}\text{vec}(\textbf{B}_u)\text{vec}^H(\textbf{B}_u)$.
It can be verified that ${\bm{\Psi}}_1$ is an $(L^2N_T^2 \times L^2N_T^2)$ Hermitian positive definite matrix.
Following this approach, the objective can be expressed as \vspace{-2mm}
\begin{equation}\label{eq:obj_majorized1}
    \begin{aligned}    g(\bm{x})&=\text{vec}^H(\bm{x}\bm{x}^H)\underbrace{\left(\omega_{bp}\bm{\Psi}_1+\omega_{ac}\bm{\Psi}_2+\omega_{cc}\bm{\Psi}_3\right)}_{\bm{\Psi}}\text{vec}(\bm{x}\bm{x}^H) \\ 
    &=\text{vec}^H(\bm{x}\bm{x}^H)\bm{\Psi}\text{vec}(\bm{x}\bm{x}^H), \vspace{-4mm}
\end{aligned}
\end{equation}
where  \vspace{-1.5mm}
\begin{align*} 
    \bm{\Psi}_2 &\triangleq \displaystyle\sum_{q=1}^Q\displaystyle\sum_{\substack{\tau=-P+1, \\ \tau\neq 0}}^{P-1}\text{vec}(\textbf{D}_{\tau,q,q})\text{vec}^H(\textbf{D}_{\tau,q,q}) ; \; \\
    \textrm{and} \; \; \bm{\Psi}_3 &\triangleq\displaystyle\sum_{q=1}^{Q}\displaystyle\sum_{\substack{q'=1, \\ q'\neq q}}^Q\displaystyle\sum_{\tau=-P+1}^{P-1}\text{vec}(\textbf{D}_{\tau,q,q'})\text{vec}^H(\textbf{D}_{\tau,q,q'}).
\end{align*}
Then, we use the following lemma to construct a majorizer of the fourth-order objective function.
\begin{lemma}\label{lemma:Taylor} 
\cite[(13)]{sun2016majorization}
Let $\textbf{Q},\textbf{R}$ be Hermitian matrices with $\textbf{R}\succeq \textbf{Q}$. 
Then, a quadratic function $\bm{u}^H\textbf{Q}\bm{u}$ can be majorized at a point $\bm{u}_t$ as \vspace{-1.mm}
\begin{align*}\label{ineq:majorizer}
    \bm{u}^H\textbf{Q}\bm{u} \leq \bm{u}^H\textbf{R}\bm{u}&+2\Re\{\bm{u}^H(\textbf{Q}-\textbf{R})\bm{u}_t\} +\bm{u}_t^H(\textbf{R}-\textbf{Q})\bm{u}_t. \vspace{-2mm}
\end{align*}
\end{lemma}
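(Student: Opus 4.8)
The plan is to prove the majorization inequality by reducing everything to a single positive semidefinite quadratic form and completing the square. The hypothesis $\textbf{R}\succeq\textbf{Q}$ is exactly what is needed: it guarantees that the difference matrix $\textbf{M}\triangleq\textbf{R}-\textbf{Q}$ is Hermitian positive semidefinite, which I will use as the linchpin of the argument.

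First I would move the left-hand side to the right and collect terms. Subtracting $\bm{u}^H\textbf{Q}\bm{u}$ from the proposed upper bound and grouping the quadratic parts via $\bm{u}^H\textbf{R}\bm{u}-\bm{u}^H\textbf{Q}\bm{u}=\bm{u}^H\textbf{M}\bm{u}$, the cross term via $2\Re\{\bm{u}^H(\textbf{Q}-\textbf{R})\bm{u}_t\}=-2\Re\{\bm{u}^H\textbf{M}\bm{u}_t\}$, and the constant via $\bm{u}_t^H(\textbf{R}-\textbf{Q})\bm{u}_t=\bm{u}_t^H\textbf{M}\bm{u}_t$, the gap between the bound and $\bm{u}^H\textbf{Q}\bm{u}$ becomes
\[
\bm{u}^H\textbf{M}\bm{u}-2\Re\{\bm{u}^H\textbf{M}\bm{u}_t\}+\bm{u}_t^H\textbf{M}\bm{u}_t.
\]

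Next I would recognize this expression as a perfect square. Because $\textbf{M}$ is Hermitian, the identity $\bm{u}^H\textbf{M}\bm{u}_t+\bm{u}_t^H\textbf{M}\bm{u}=2\Re\{\bm{u}^H\textbf{M}\bm{u}_t\}$ holds, so the displayed gap equals $(\bm{u}-\bm{u}_t)^H\textbf{M}(\bm{u}-\bm{u}_t)$. Since $\textbf{M}\succeq\textbf{0}$, this quadratic form is nonnegative for every $\bm{u}$, establishing that the right-hand side is indeed an upper bound on $\bm{u}^H\textbf{Q}\bm{u}$. Evaluating at $\bm{u}=\bm{u}_t$ makes the gap vanish, so the bound is tight at the expansion point, confirming that it is a valid majorizer in the MM sense.

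This argument is essentially routine, so I do not anticipate a genuine obstacle; the only point requiring care is the correct handling of the real-part operator through the Hermitian symmetry of $\textbf{M}$ when reassembling the square, together with tracking the sign flip from $\textbf{Q}-\textbf{R}=-\textbf{M}$ in the cross term so that the completed square comes out with the right orientation.
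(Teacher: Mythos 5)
Your proof is correct: setting $\textbf{M}=\textbf{R}-\textbf{Q}\succeq\textbf{0}$ and completing the square so that the gap between the two sides equals $(\bm{u}-\bm{u}_t)^H\textbf{M}(\bm{u}-\bm{u}_t)\geq 0$, with equality at $\bm{u}=\bm{u}_t$, is exactly the right argument, and your handling of the real-part operator via Hermitian symmetry is sound. Note that the paper itself offers no proof of this lemma---it is simply cited from \cite[(13)]{sun2016majorization}---so your argument is a self-contained reconstruction of the standard justification underlying that reference, and it matches that standard approach rather than introducing a different one.
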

According to the above lemma, we can majorize a quadratic function by choosing a matrix $\textbf{R}$ such that $\textbf{R}\succeq \textbf{Q}$.
To simplify the right-hand side, matrix $\textbf{R}$ is required to be diagonal \cite{song2015sequence}.
In the literature, the predominant choice for $\textbf{R}$ is $\textbf{R}=\lambda_Q \textbf{I}$ where $\lambda_Q$ is the largest eigenvalue of $\textbf{Q}$ \cite{liFastAlgorithmsDesigning2018,liuDualFunctionalRadarCommunicationWaveform2021,zhao2016unified,sun2016majorization}.
However, this majorizer can be loose when $\textbf{Q}$ is ill-conditioned.
\cite{song2015sequence} proposed a novel diagonal matrix structure to enable tight majorization for the case where $\textbf{Q}$ is a non-negative symmetric matrix, which is not directly applicable to \eqref{eq:obj_majorized1} since $\bm{\Psi}$ is complex Hermitian.
Here, we develop a more general majorizer for any quadratic function with a complex Hermitian matrix based on the following lemma.




\begin{table}[t]
\centering
\caption{Comparison of diagonal matrices}
\label{tab:maj-slim}
\renewcommand{\arraystretch}{1.2} 
\begin{tabular}{l c p{3.5cm}}
\toprule
\textbf{Majorizer} & \textbf{Conditions} & \textbf{Characteristics} \\
\midrule
$\lambda_{\max}(\mathbf{Q})\mathbf{I}$ & $\mathbf{Q} \in \mathbb{C}^N$ & Universal, may be loose when $\textbf{Q}$ is ill-conditioned. \\

$\operatorname{diag}(\mathbf{Q}\mathbf{1})$ & $\mathbf{Q} \in \mathbb{S}_+^N$ & Tighter, limited to real non-negative symmetric matrices. \\

{$\operatorname{diag}(\hat{\mathbf{Q}}\mathbf{1})$} { (Ours)} & $\mathbf{Q} \in \mathbb{H}^N$ & Tighter, generalizes to all Hermitian matrices. \\
\bottomrule
\end{tabular}
\end{table}


\begin{lemma}\label{lemma:diag}
Let $\textbf{Q}$ be a Hermitian matrix.
Let $\hat{\textbf{Q}}$ be a $N \times N$ matrix such that $\hat{{Q}}_{i,j}=|{Q}_{i,j}|$.{
\color{black}
Then, $\text{diag}(\hat{\textbf{Q}}\textbf{1}_{N\times 1})\succeq \textbf{Q}$.
}
\end{lemma}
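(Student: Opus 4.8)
The plan is to show that the Hermitian matrix $\textbf{R} \triangleq \text{diag}(\hat{\textbf{Q}}\textbf{1}_{N\times 1})$ dominates $\textbf{Q}$ in the positive-semidefinite order, i.e., $\textbf{R}-\textbf{Q}\succeq\textbf{0}$. Since $\textbf{R}$ is real diagonal and $\textbf{Q}$ is Hermitian, $\textbf{R}-\textbf{Q}$ is Hermitian, so it suffices to verify that $\bm{u}^H(\textbf{R}-\textbf{Q})\bm{u}\geq 0$ for every $\bm{u}\in\mathbb{C}^N$. The $i$th diagonal entry of $\textbf{R}$ is $R_{i,i}=\sum_{j=1}^N|Q_{i,j}|$, which is exactly a Gershgorin-type row sum of absolute values. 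The approach is therefore a Gershgorin/diagonal-dominance argument adapted to the Hermitian case, writing out the quadratic form and bounding the off-diagonal cross terms.

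First I would expand the quadratic form explicitly as
\begin{align*}
\bm{u}^H(\textbf{R}-\textbf{Q})\bm{u}
&=\sum_{i=1}^N R_{i,i}|u_i|^2-\sum_{i=1}^N\sum_{j=1}^N u_i^* Q_{i,j} u_j\\
&=\sum_{i=1}^N\Big(\sum_{j=1}^N|Q_{i,j}|\Big)|u_i|^2-\sum_{i=1}^N\sum_{j=1}^N u_i^* Q_{i,j} u_j.
\end{align*}
The second step is to symmetrize the diagonal-weight term. Using $\sum_i R_{i,i}|u_i|^2 = \tfrac12\sum_{i,j}|Q_{i,j}|\big(|u_i|^2+|u_j|^2\big)$, which follows because $\hat{\textbf{Q}}$ is symmetric ($|Q_{i,j}|=|Q_{j,i}|$ as $\textbf{Q}$ is Hermitian), and symmetrizing the cross term via $\Re\{u_i^*Q_{i,j}u_j\}$, I would rewrite the whole expression as a single sum over unordered pairs.

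The key inequality to invoke is $|u_i|^2+|u_j|^2\geq 2|u_i||u_j|\geq 2|u_i^*Q_{i,j}u_j|/|Q_{i,j}|$ wherever $Q_{i,j}\neq 0$, equivalently $|Q_{i,j}|\big(|u_i|^2+|u_j|^2\big)\geq 2\,\Re\{u_i^*Q_{i,j}u_j\}$, since $\Re\{z\}\leq|z|$ and $|u_i^*Q_{i,j}u_j|=|Q_{i,j}||u_i||u_j|$. Applying this termwise to the symmetrized form yields
\begin{align*}
\bm{u}^H(\textbf{R}-\textbf{Q})\bm{u}
=\tfrac12\sum_{i,j}\Big(|Q_{i,j}|\big(|u_i|^2+|u_j|^2\big)-2\Re\{u_i^*Q_{i,j}u_j\}\Big)\geq 0,
\end{align*}
which establishes $\textbf{R}\succeq\textbf{Q}$ and hence the claim. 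I expect the main obstacle to be bookkeeping rather than conceptual: carefully handling the diagonal ($i=j$) contributions so they cancel exactly (there $|Q_{i,i}||u_i|^2\geq\Re\{Q_{i,i}\}|u_i|^2$ holds since $Q_{i,i}$ is real with $|Q_{i,i}|\geq Q_{i,i}$), and making the symmetrization of the Hermitian cross terms precise so that the $\Re\{\cdot\}$ appears with the correct factor. Once the pairwise decomposition is set up correctly, each summand is manifestly nonnegative and the proof closes.
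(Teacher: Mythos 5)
Your proof is correct and follows essentially the same route as the paper's: expand $\bm{u}^H(\mathrm{diag}(\hat{\textbf{Q}}\textbf{1})-\textbf{Q})\bm{u}$, symmetrize using $|Q_{i,j}|=|Q_{j,i}|$, and bound each pair term via $|Q_{i,j}|\left(|u_i|^2+|u_j|^2\right)\geq 2|Q_{i,j}||u_i||u_j|\geq 2\Re\{u_i^*Q_{i,j}u_j\}$. The diagonal bookkeeping you worried about needs no special treatment, since the $i=j$ terms are covered by the same inequality (they reduce to $2(|Q_{i,i}|-Q_{i,i})|u_i|^2\geq 0$), exactly as in the paper.
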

\begin{proof}
    For any $\bm{u}$, we have
\begin{equation}
    \begin{aligned}
    &\bm{u}^H(\text{diag}(\hat{\textbf{Q}}\textbf{1})-\textbf{Q})\bm{u}
    = \displaystyle\sum_{i,j}|{Q}_{i,j}||u_i|^2-\displaystyle\sum_{i,j}u_i^*{Q}_{i,j}u_j
    \\ \nonumber
    = &
    \frac{1}{2}\displaystyle\sum_{i,j}\left(2|{Q}_{i,j}||u_i|^2-2\Re\{{Q}_{i,j}u_i^*u_j\}\right)\\
    = &
    \frac{1}{2}\displaystyle\sum_{i,j}\left(|{Q}_{i,j}||u_i|^2+|{Q}_{j,i}||u_j|^2
    - 2\Re\{{Q}_{i,j}u_i^*u_j\}\right) \\ \nonumber
    =&\frac{1}{2}\displaystyle\sum_{i,j}\left(|{Q}_{i,j}||u_i|^2+|{Q}_{i,j}|||u_j|^2 - 2\Re\{{Q}_{i,j}u_i^*u_j\}\right),
\end{aligned}
\end{equation}
where the last equality follows from $|Q_{i,j}|=|Q_{j,i}^*|=|Q_{j,i}|$.{
\color{black}
Note we drop the size of the all-ones vector for ease of notation.
}
Now, for any $i,j$, we have $|{Q}_{i,j}||u_i|^2 +|{Q}_{i,j}||u_j|^2 
    - 2\Re\{{Q}_{i,j}u_i^*u_j\} 
  \geq   |{Q}_{i,j}|(|u_i|- |u_j|)^2 \geq 0$,
which follows from the fact that $|{Q}_{i,j}||u_i||u_j|\geq \Re\{{Q}_{i,j}u_i^*u_j\}$.
It follows that $\bm{u}^H(\text{diag}(\hat{\textbf{Q}}\textbf{1})-\textbf{Q})\bm{u} \geq 0$.
\end{proof}

}

{

Let $\mathbb{H}^{N\times N}$ and $\mathbb{S}_+^{N\times N}$ be the set of $N\times N$ Hermitian and real nonnegative symmetric matrices, respectively.
Lemma \ref{lemma:diag} shows that our proposed majorizer $\text{diag}(\hat{\textbf{Q}}\textbf{1})$ applies to any quadratic function $\bm{x}^H\textbf{Q}\bm{x}$ with $\textbf{Q}\in \mathbb{H}^{N\times N}$, whereas the majorizer in \cite{song2015sequence} applies only to the case when $\textbf{Q}\in \mathbb{S}_+^{N\times N}$.
From a signal processing perspective, this provides a significant advantage because the proposed diagonal matrix can replace the standard largest eigenvalue-based majorizers, $\lambda_Q \textbf{I}$, in many existing MM algorithms  \cite{liFastAlgorithmsDesigning2018,liuDualFunctionalRadarCommunicationWaveform2021,zhao2016unified,sun2016majorization}, accelerating their convergence.
A comparison of ours and other existing majorizers is provided in Table \ref{tab:maj-slim}.
Using Lemma \ref{lemma:diag}, a tight majorizer for the beam shaping cost can be constructed as follows (with the proof in \cite{lee2024constant}).
\begin{lemma}
\label{lemma:quad_maj}
    Let $\hat{\bm{\Psi}}$ be a matrix such that $\hat{{\Psi}}_{i,j}=|{{\Psi}}_{i,j}|$ for all $i,j$.
    The objective function \eqref{eq:obj_majorized1} can be majorized as
    \begin{equation}\label{ineq:obj_quad}
        g(\bm{x})\leq \bm{x}^H\bm{\Phi}\bm{x} + const, 
        \vspace{-3mm}
    \end{equation}
    where 
 \begin{align*}    
 {\bm{\Phi}}&\triangleq 2\left(\omega_{bp}\bm{\Phi}_1 + \omega_{ac}\bm{\Phi}_2 +\omega_{cc}\bm{\Phi}_3
-\left({\textbf{E}}\odot\bm{x}_t\bm{x}_t^H\right)\right), \\
\bm{\Phi}_1 & \triangleq \displaystyle\sum_{u=1}^U\bm{x}_t^H{\textbf{B}}_u^H\bm{x}_t{\textbf{B}}_u, 
\bm{\Phi}_2  \triangleq \displaystyle\sum_{q=1}^Q\displaystyle\sum_{\substack{\tau=-P+1, \\ \tau\neq 0}}^{P-1}\bm{x}_t^H\textbf{D}^H_{\tau,q,q}\bm{x}_t\textbf{D}_{\tau,q,q},
\\
\bm{\Phi}_3 & \triangleq \displaystyle\sum_{q=1}^{Q}\displaystyle\sum_{\substack{q'=1, \\ q'\neq q}}^Q\displaystyle\sum_{\tau=-P+1}^{P-1}\bm{x}_t^H\textbf{D}^H_{\tau,q,q'}\bm{x}_t\textbf{D}_{\tau,q,q'},\
{\textbf{E}}\triangleq\text{mat}(\hat{\bm{\Psi}}\textbf{1}).
\end{align*}
\end{lemma}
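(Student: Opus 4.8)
The plan is to obtain the quadratic surrogate by a single application of the generic MM bound in Lemma~\ref{lemma:Taylor} to the already-vectorized quartic form \eqref{eq:obj_majorized1}. Concretely, I would set $\bm{u}=\text{vec}(\bm{x}\bm{x}^H)$, $\bm{u}_t=\text{vec}(\bm{x}_t\bm{x}_t^H)$, and $\textbf{Q}=\bm{\Psi}$, and then \emph{choose} the diagonal surrogate $\textbf{R}=\text{diag}(\hat{\bm{\Psi}}\textbf{1})$. The crucial admissibility condition $\textbf{R}\succeq\textbf{Q}$ is exactly what Lemma~\ref{lemma:diag} certifies (applied to the Hermitian matrix $\bm{\Psi}$, whose entrywise modulus is $\hat{\bm{\Psi}}$). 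With this choice, Lemma~\ref{lemma:Taylor} immediately yields $g(\bm{x})\le \bm{u}^H\textbf{R}\bm{u}+2\Re\{\bm{u}^H(\textbf{Q}-\textbf{R})\bm{u}_t\}+\bm{u}_t^H(\textbf{R}-\textbf{Q})\bm{u}_t$, and the remaining work is purely to rewrite these three terms back in terms of $\bm{x}$.

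Next I would dispose of the two terms that should end up inside $const$. The last term $\bm{u}_t^H(\textbf{R}-\textbf{Q})\bm{u}_t$ depends only on the fixed expansion point $\bm{x}_t$, hence is constant. For the quadratic term $\bm{u}^H\textbf{R}\bm{u}$, the key observation is that under the constant modulus constraint $|x_n|=1$ every entry of $\bm{x}\bm{x}^H$, and therefore every entry of $\bm{u}=\text{vec}(\bm{x}\bm{x}^H)$, has unit modulus; since $\textbf{R}$ is diagonal this gives $\bm{u}^H\textbf{R}\bm{u}=\sum_i R_{ii}|u_i|^2=\text{Tr}(\textbf{R})=\textbf{1}^H\hat{\bm{\Psi}}\textbf{1}$, again a constant. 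This is precisely where the diagonal structure of $\textbf{R}$ pays off, and it is the reason the cited result of \cite{song2015sequence} insists on a diagonal majorizing matrix. All $\bm{x}$-dependence therefore resides in the single cross term $2\Re\{\bm{u}^H(\textbf{Q}-\textbf{R})\bm{u}_t\}$.

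Finally I would convert that cross term into a quadratic form in $\bm{x}$, splitting it into a $\textbf{Q}$-part and an $\textbf{R}$-part. For the $\textbf{Q}$-part I would exploit the rank-one structure of $\bm{\Psi}_1,\bm{\Psi}_2,\bm{\Psi}_3$ together with the identities $\text{vec}^H(\textbf{A})\text{vec}(\textbf{B})=\text{Tr}(\textbf{A}^H\textbf{B})$ and $\text{Tr}(\bm{x}\bm{x}^H\textbf{B})=\bm{x}^H\textbf{B}\bm{x}$; applied term by term this collapses $\text{vec}^H(\bm{x}\bm{x}^H)\bm{\Psi}\,\text{vec}(\bm{x}_t\bm{x}_t^H)$ into $\bm{x}^H(\omega_{bp}\bm{\Phi}_1+\omega_{ac}\bm{\Phi}_2+\omega_{cc}\bm{\Phi}_3)\bm{x}$, with each $\bm{\Phi}_j$ carrying the scalar weights $\bm{x}_t^H\textbf{B}_u^H\bm{x}_t$ (resp.\ $\bm{x}_t^H\textbf{D}^H_{\tau,q,q'}\bm{x}_t$). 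For the $\textbf{R}$-part I would write $\text{vec}^H(\bm{x}\bm{x}^H)\,\text{diag}(\hat{\bm{\Psi}}\textbf{1})\,\text{vec}(\bm{x}_t\bm{x}_t^H)$ entrywise, index it through the $\text{vec}/\text{mat}$ correspondence, and recognize the result as $\bm{x}^H(\textbf{E}\odot\bm{x}_t\bm{x}_t^H)\bm{x}$ with $\textbf{E}=\text{mat}(\hat{\bm{\Psi}}\textbf{1})$. Gathering the overall factor of $2$ and the two constants then produces $g(\bm{x})\le\bm{x}^H\bm{\Phi}\bm{x}+const$ with the stated $\bm{\Phi}$. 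The main obstacle I anticipate is exactly this last bookkeeping: tracking the $\text{vec}/\text{mat}$ indices so that the diagonal term lands as a Hadamard product, and confirming that $\bm{\Phi}_1,\bm{\Phi}_2,\bm{\Phi}_3$ and $\textbf{E}\odot\bm{x}_t\bm{x}_t^H$ are Hermitian (using $\textbf{B}_u^H=\textbf{B}_u$, the pairing $\textbf{D}_{-\tau,q,q}=\textbf{D}_{\tau,q,q}^H$ under the symmetric delay range, and the reality of $\textbf{E}$), so that the $\Re\{\cdot\}$ operators collapse and the final quadratic form is real-valued as required.
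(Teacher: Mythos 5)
Your proposal is correct and is essentially the paper's own approach: the paper (which defers the detailed derivation to \cite{lee2024constant}) likewise constructs this majorizer by a single application of Lemma~\ref{lemma:Taylor} with $\textbf{Q}=\bm{\Psi}$ and the diagonal choice $\textbf{R}=\text{diag}(\hat{\bm{\Psi}}\textbf{1})$ certified by Lemma~\ref{lemma:diag}, converting the cross term back into a quadratic form in $\bm{x}$ (which is where the $\textbf{E}\odot\bm{x}_t\bm{x}_t^H$ term arises) and absorbing $\bm{u}^H\textbf{R}\bm{u}$ into the constant via the unit modulus of the entries of $\text{vec}(\bm{x}\bm{x}^H)$. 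This is exactly the mechanism the paper applies explicitly one level down in its proof of Lemma~\ref{lemma:lin_maj}, so your reconstruction, including the Hermitian-symmetry bookkeeping for $\bm{\Phi}_1,\bm{\Phi}_2,\bm{\Phi}_3$ and $\textbf{E}$, matches the intended argument.
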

{
To reduce the computational burden, matrices $\bm{\Phi}_1$, $\bm{\Phi}_2$, and $\bm{\Phi}_3$ can be rewritten, respectively, as
\begin{equation*}
    \begin{aligned}
        \bm{\Phi}_1&= \textbf{I}_L \otimes \left([\textbf{a}_1\textbf{a}_1^H,\dots,\textbf{a}_U\textbf{a}^H_U](\textbf{u}_1\otimes \textbf{I}_N)\right) \\        
        \bm{\Phi}_2&= \displaystyle\sum_{q=1}^{Q}\displaystyle\sum_{\substack{q'=1, \\ q'\neq q}}^Q [\textbf{J}_{-P+1},\dots,\textbf{I}_L,\dots,\textbf{J}_{P-1}](\textbf{u}_2\otimes \textbf{I}_L) \otimes \textbf{a}_q\textbf{a}^H_{q'} \\        
        \bm{\Phi}_3&= \displaystyle\sum_{q=1}^{Q}[\textbf{J}_{-P+1},\dots,\dots,\textbf{J}_{P-1}](\textbf{u}_3\otimes \textbf{I}_L) \otimes \textbf{a}_q\textbf{a}^H_{q}       
    \end{aligned}
\end{equation*}
where \vspace{-1mm}
\begin{equation*}
    \begin{aligned}
        \textbf{u}_1 &= \begin{bmatrix}
            \text{vec}^H(\textbf{a}_1\textbf{a}_1^H) \\
            \vdots \\
            \text{vec}^H(\textbf{a}_U\textbf{a}_U^H)
        \end{bmatrix} \text{vec}(\textbf{X}_t\textbf{X}_t^H)\\
        \textbf{u}_2&=\text{vec}^H(\textbf{X}_t^H\textbf{a}_q\textbf{a}^H_{q'}\textbf{X}_t)\\
        &\cdot[\text{vec}(\textbf{J}_{-P+1}),\dots,\text{vec}(\textbf{I}_L),\dots,\text{vec}(\textbf{J}_{P-1})]\\\textbf{u}_3&=\text{vec}^H(\textbf{X}_t^H\textbf{a}_q\textbf{a}^H_{q}\textbf{X}_t) \\
        &\cdot[\text{vec}(\textbf{J}_{-P+1}),\dots,\text{vec}(\textbf{J}_{-1}),\text{vec}(\textbf{J}_{1}),\dots,\text{vec}(\textbf{J}_{P-1})].
    \end{aligned}
\end{equation*}
Note we temporarily denoted $\textbf{a}(\theta_u)=\textbf{a}_u$ and $\textbf{a}(\theta_q)=\textbf{a}_q$ for brevity.
}
This majorizer is still quadratic, which is challenging to optimize under the constant modulus constraint. 
Thus, we further majorize the obtained quadratic function to lower its order as follows.
\begin{lemma}\label{lemma:lin_maj}
    Let $\hat{\bm{\Phi}}$ be a matrix such that $\hat{{\Phi}}_{i,j}=|{{\Phi}}_{i,j}|$ for any $i,j$.
    The quadratic function on the right-hand side of \eqref{ineq:obj_quad} is majorized by 
    \begin{equation}
       \bm{x}^H\bm{\Phi}\bm{x} \leq \underbrace{\Re\{\bm{x}^H\textbf{d}\}}_{\bar{g}(\bm{x})}+const, \vspace{-1.5mm}
    \end{equation}
where $\textbf{d}\triangleq2({\bm{\Phi}}-\text{diag}(\hat{\bm{\Phi}}\textbf{1}))\bm{x}_t$.   
\end{lemma}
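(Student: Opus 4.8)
The plan is to chain the two majorization tools already available, namely Lemma~\ref{lemma:Taylor} (the quadratic tangent-type upper bound) and Lemma~\ref{lemma:diag} (the diagonal-dominance bound), and then collapse the resulting quadratic term into a constant by invoking the unit-modulus structure of the feasible points. First I would observe that $\bm{\Phi}$ is Hermitian, which follows from its construction in Lemma~\ref{lemma:quad_maj}: each $\bm{\Phi}_i$ is a sum of Hermitian matrices weighted by the real scalars $\bm{x}_t^H(\cdot)\bm{x}_t$, and the Hadamard term $\textbf{E}\odot\bm{x}_t\bm{x}_t^H$ is Hermitian. Because $\bm{\Phi}$ is Hermitian, I may apply Lemma~\ref{lemma:diag} with $\textbf{Q}=\bm{\Phi}$ and $\hat{\textbf{Q}}=\hat{\bm{\Phi}}$ to produce the diagonal dominating matrix $\textbf{R}\triangleq\text{diag}(\hat{\bm{\Phi}}\textbf{1})\succeq\bm{\Phi}$.

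Next I would invoke Lemma~\ref{lemma:Taylor} with $\textbf{Q}=\bm{\Phi}$, $\textbf{R}=\text{diag}(\hat{\bm{\Phi}}\textbf{1})$, and $\bm{u}_t=\bm{x}_t$, which yields
$$\bm{x}^H\bm{\Phi}\bm{x} \leq \bm{x}^H\textbf{R}\bm{x} + 2\Re\{\bm{x}^H(\bm{\Phi}-\textbf{R})\bm{x}_t\} + \bm{x}_t^H(\textbf{R}-\bm{\Phi})\bm{x}_t.$$
The crucial step is to exploit the constant modulus constraint $\textbf{C2}$: since $\textbf{R}$ is diagonal and $|x_n|=1$ for every $n$, the quadratic term reduces to $\bm{x}^H\textbf{R}\bm{x}=\sum_n R_{n,n}|x_n|^2=\text{Tr}(\textbf{R})$, a constant. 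The last term $\bm{x}_t^H(\textbf{R}-\bm{\Phi})\bm{x}_t$ is independent of $\bm{x}$ and hence also constant. Absorbing both into $const$ leaves only the linear term, and identifying $\textbf{d}=2(\bm{\Phi}-\text{diag}(\hat{\bm{\Phi}}\textbf{1}))\bm{x}_t=2(\bm{\Phi}-\textbf{R})\bm{x}_t$ gives $2\Re\{\bm{x}^H(\bm{\Phi}-\textbf{R})\bm{x}_t\}=\Re\{\bm{x}^H\textbf{d}\}$, which is exactly the claimed bound $\bar{g}(\bm{x})+const$.

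I expect the main subtlety to be justifying the collapse of the quadratic term into a constant, which is where the argument departs from the standard eigenvalue-based majorizer. The whole point of selecting a diagonal $\textbf{R}$ via Lemma~\ref{lemma:diag}, rather than the usual $\lambda_{\max}\textbf{I}$, is that $\text{diag}(\hat{\bm{\Phi}}\textbf{1})$ is a much tighter diagonal dominating matrix while remaining diagonal, so that $\bm{x}^H\textbf{R}\bm{x}$ still collapses to $\text{Tr}(\textbf{R})$ under unit modulus, and the added tightness translates directly into faster MM convergence. A secondary point to verify is that Lemma~\ref{lemma:diag} genuinely applies to the complex Hermitian $\bm{\Phi}$; this is guaranteed because that lemma is proved for arbitrary Hermitian matrices, not merely real symmetric ones.
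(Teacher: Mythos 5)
Your proof is correct and takes essentially the same route as the paper: Lemma~\ref{lemma:diag} supplies the diagonal dominating matrix $\text{diag}(\hat{\bm{\Phi}}\textbf{1})\succeq\bm{\Phi}$, Lemma~\ref{lemma:Taylor} is then applied at $\bm{u}_t=\bm{x}_t$ with this choice of $\textbf{R}$, and the constant modulus constraint collapses $\bm{x}^H\text{diag}(\hat{\bm{\Phi}}\textbf{1})\bm{x}$ into the constant $\textbf{1}^T\hat{\bm{\Phi}}\textbf{1}$, which is exactly the paper's argument. The only imprecision is your side remark on why $\bm{\Phi}$ is Hermitian: for $\bm{\Phi}_2$ and $\bm{\Phi}_3$ the scalar weights $\bm{x}_t^H\textbf{D}^H_{\tau,q,q'}\bm{x}_t$ are complex and the individual matrices $\textbf{D}_{\tau,q,q'}$ are not Hermitian, so Hermitian symmetry instead follows from pairing the $(\tau,q,q')$ and $(-\tau,q',q)$ terms via $\textbf{D}^H_{\tau,q,q'}=\textbf{D}_{-\tau,q',q}$ (and the symmetry of $\textbf{E}$) — but the conclusion you invoke is true, so the proof stands.
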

\begin{proof}
By applying Lemma \ref{lemma:diag} and Lemma \ref{lemma:quad_maj}, we have
    \begin{align*}
    \bm{x}^H{\bm{\Phi}}\bm{x} 
     &\leq \underbrace{\bm{x}^H\text{diag}(\hat{\bm{\Phi}}\textbf{1})\bm{x}}_{\textbf{1}^T\hat{\bm{\Phi}}\textbf{1}}+
     \Re\{\bm{x}^H\underbrace{2({\bm{\Phi}}-\text{diag}(\hat{\bm{\Phi}}\textbf{1}))\bm{x}_t}_{\textbf{d}
    }\}  \\ \nonumber
    &\quad+\bm{x}_t^H(\text{diag}(\hat{\bm{\Phi}}\textbf{1})-{\bm{\Phi}})\bm{x}_t
    = \Re\{\bm{x}^H\textbf{d}\}+const.
\end{align*}\vspace{-2mm}
\end{proof}
\begin{theorem}
Given the constant modulus constraint, the objective function can be majorized as
    \begin{equation}\label{eq:majorizer}
       \omega_{bp}\tilde{g}_{bp}(\bm{x})+\omega_{ac} g_{ac}(\bm{x})+\omega_{cc} g_{cc}(\bm{x}) \leq \bar{g}(\bm{x})+const,
    \end{equation}    
    where $ \bar{g}(\bm{x}) = \Re\{\bm{x}^H\textbf{d}\} =\Re\{\textbf{d}^H\bm{x}\}$.
\end{theorem}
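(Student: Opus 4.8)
The plan is to chain the two majorization lemmas established immediately above. First I would invoke Lemma~\ref{lemma:quad_maj}, which majorizes the quartic objective $g(\bm{x})$ by the quadratic form $\bm{x}^H\bm{\Phi}\bm{x}$ up to an additive constant, i.e., $g(\bm{x}) \le \bm{x}^H\bm{\Phi}\bm{x} + c_1$. Then I would apply Lemma~\ref{lemma:lin_maj} to the resulting quadratic, which further majorizes it by the linear term $\Re\{\bm{x}^H\textbf{d}\}$ up to a constant, giving $\bm{x}^H\bm{\Phi}\bm{x} \le \Re\{\bm{x}^H\textbf{d}\} + c_2$. Composing these two inequalities and absorbing $c_1 + c_2$ into a single $const$ yields the claimed bound with $\bar{g}(\bm{x}) = \Re\{\bm{x}^H\textbf{d}\}$, since both majorizations are built at the same expansion point $\bm{x}_t$ and hence remain valid simultaneously.

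The essential point, and where the constant modulus constraint enters, is verifying that the terms discarded as $const$ are genuinely independent of $\bm{x}$. The bound of Lemma~\ref{lemma:lin_maj}, obtained through Lemma~\ref{lemma:diag} applied to $\bm{\Phi}$, produces the diagonal-dominant quadratic $\bm{x}^H\text{diag}(\hat{\bm{\Phi}}\textbf{1})\bm{x}$. In general this term depends on $\bm{x}$, but under $|x_n| = 1$ for all $n$ it collapses to $\sum_n [\hat{\bm{\Phi}}\textbf{1}]_n = \textbf{1}^T\hat{\bm{\Phi}}\textbf{1}$, which is constant. I would make this collapse explicit, as it is precisely the content of the hypothesis ``given the constant modulus constraint'' and is exactly what turns the quadratic majorizer into a purely linear one; without it, the right-hand side would retain an $\bm{x}$-dependent quadratic remainder and the theorem would fail.

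The final step is bookkeeping: collect the reference-point terms $\bm{x}_t^H(\cdot)\bm{x}_t$ arising from both lemmas together with $\textbf{1}^T\hat{\bm{\Phi}}\textbf{1}$ into the single constant, and note $\Re\{\bm{x}^H\textbf{d}\} = \Re\{\textbf{d}^H\bm{x}\}$ since the real part is invariant under conjugation. I expect the main obstacle to be not any hard estimate, since Lemmas~\ref{lemma:Taylor}, \ref{lemma:diag}, \ref{lemma:quad_maj}, and~\ref{lemma:lin_maj} already carry the analytic weight, but rather the care needed to confirm that every dropped remainder lives inside the constant modulus set, so that no $\bm{x}$-dependent quantity is silently absorbed into $const$.
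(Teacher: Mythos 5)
Your proposal is correct and follows exactly the paper's route: the theorem is obtained by chaining Lemma~\ref{lemma:quad_maj} with Lemma~\ref{lemma:lin_maj}, and your explicit observation that the constant modulus constraint collapses $\bm{x}^H\text{diag}(\hat{\bm{\Phi}}\textbf{1})\bm{x}$ to the constant $\textbf{1}^T\hat{\bm{\Phi}}\textbf{1}$ is precisely the step the paper marks with an underbrace in the proof of Lemma~\ref{lemma:lin_maj}. Nothing is missing; your bookkeeping of the $\bm{x}_t$-dependent remainders into $const$ matches the paper's treatment.
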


\subsection{Solution via the Method of Lagrange Multipliers}

Now, using \eqref{eq:majorizer}, problem \eqref{eq:prob_formulation2} can be reformulated as
\begin{equation}\label{eq:MM_formulation}
\begin{aligned}
& \underset{\bm{x}}{\min}
& &  \Re\{\textbf{d}^H\bm{x}\} \\ 
& \text{s.t.}
& &  \Re\{\tilde{\textbf{h}}_{\ell,m}^H\bm{x}\}\geq \tilde{\Gamma}_m, \ \forall \ell,m\\
& & & | {x}_{n} | = 1, \ \forall n=1,2,\dots,LN_T \\
\end{aligned}
\end{equation}
The majorized objective can be rewritten as $\Re\{\textbf{d}^H\bm{x}\}=\sum_{\ell=1}^{L}\Re\{\textbf{d}_{\ell}^H\bm{x}_{\ell}\}$, where $\bm{x}_{\ell}$ and $\textbf{d}_{\ell}$ are the $\ell$th subvectors of $\bm{x}=[\bm{x}_1^H,\bm{x}_2^H,\dots,\bm{x}_L^H]^H$ and $\textbf{d}=[\textbf{d}_1^H,\textbf{d}_2^H,\dots,\textbf{d}_{L}^H]^H$, respectively.
Also, from \eqref{ineq:CI}, we have $\Re\{\tilde{\textbf{h}}_{\ell,m}^H\bm{x}\}=\Re\{\hat{\textbf{h}}_{\ell,m}^H\bm{x}_{\ell}\}$.
Hence, the problem \eqref{eq:MM_formulation} can be rewritten as
\begin{equation}\label{eq:MM_formulation2}
\begin{aligned}
& \underset{\{\bm{x}_{\ell}\}_{\ell=1}^L}{\min}
& &  \sum_{\ell=1}^{L}\Re\{\textbf{d}_{\ell}^H\bm{x}_{\ell}\} \\ 
& \text{s.t.}
& &  \Re\{\hat{\textbf{h}}_{\ell,m}^H\bm{x}_{\ell}\}\geq \tilde{\Gamma}_m, \ \forall \ell,m\\
& & & | {x}_{\ell,n} | = 1, \ \forall n=1,2,\dots,N_T \\
\end{aligned}
\end{equation}
where ${x}_{\ell,n}$ is the $n$th entry of $\bm{x}_{\ell}$.
Since $\bm{x}_1,\bm{x}_2,\dots,\bm{x}_L$ are independent of each other in \eqref{eq:MM_formulation2}, the problem \eqref{eq:MM_formulation2} can be split into $L$ independent subproblems as
\begin{equation}\label{eq:MM_subprob}
\begin{aligned}
& \underset{\bm{x}_{\ell}}{\min}
& &  \bar{g}_{\ell}(\bm{x}_{\ell})  \\ 
& \text{s.t.}
& &  {h}_{\ell,m}(\bm{x}_{\ell})\leq 0, \ \forall m=1,2,\dots,2K\\
& & & | {x}_{\ell,n} | = 1, \ \forall n=1,2,\dots,N_T \\
\end{aligned}
\end{equation}
where $\bar{g}_{\ell}(\bm{x}_{\ell})=\Re\{\textbf{d}_{\ell}^H\bm{x}_{\ell}\}$ and ${h}_{\ell,m}(\bm{x}_{\ell})=-\Re\{\hat{\textbf{h}}_{\ell,m}^H\bm{x}_{\ell}\}+\tilde{\Gamma}_m$.
The Lagrange dual problem for \eqref{eq:MM_subprob} is given by
\begin{equation}\label{eq:MM_dual}
\begin{aligned}
& \underset{\bm{\nu}_{\ell}}{\sup}\ \underset{\bm{x}_{\ell}}{\min}
& &  \bar{g}_{\ell}(\bm{x}_{\ell}) + \displaystyle\sum_{m=1}^{2K} \nu_{\ell,m} {h}_{\ell,m}(\bm{x}_{\ell})\\ 
& \text{s.t.}
& &  | {x}_{\ell,n} | = 1, \ \forall n=1,2,\dots,N_T \\
& & & \nu_{\ell,m} \geq 0 , \ \forall m,\ell 
\end{aligned}
\end{equation}
where ${x}_{\ell,n}$ is the $n$th entry of $\bm{x}_{\ell}$ and $\bm{\nu}_{\ell}=[\nu_{\ell,1},\nu_{\ell,2},\dots,\nu_{\ell,2K}]$ is the Lagrange multiplier vector with ${\nu}_{\ell,m}$ being the Lagrange multiplier for the communication constraint ${h}_{\ell,m}(\bm{x}_{\ell})\leq 0$.
The inner problem of \eqref{eq:MM_dual} has a linear objective with a constant modulus constraint.
Thus, the optimal solution to the inner problem can be expressed as $\bm{x}_{\ell}^{opt}(\bm{\nu}_{\ell})=\exp{\left(j 
\angle\left(\sum_{m=1}^{2K}\nu_{\ell,m}\hat{\textbf{h}}_{\ell,m}-\textbf{d}_{\ell}\right)\right)}$.


Strong duality between the primal and dual problems holds \cite{he2022qcqp} if there exists a solution $\bm{\nu}_{\ell}$ that satisfies the following conditions:
\begin{align}
    &\bm{x}_{\ell}(\bm{\nu}_{\ell})=\exp{\left(j 
    \angle\left(\displaystyle\sum_{m=1}^{2K}\nu_{\ell,m}\hat{\textbf{h}}_{\ell,m}-\textbf{d}_{\ell}\right)\right)}, 
    \label{eq:sol_to_x}\\
    &0\leq \nu_{\ell,m} \leq \infty,\ {h}_{\ell,m}(\bm{x}_\ell(\bm{\nu}_{\ell}))\leq 0, \forall m =1,2,\dots,2K \label{ineq:dual_feasible}\\
    &\nu_{\ell,m}{h}_{\ell,m}(\bm{x}_\ell(\bm{\nu}_{\ell}))=0, \forall m =1,2,\dots,2K \label{eq:KKT}.
\end{align}

A solution satisfying \eqref{eq:sol_to_x} and \eqref{eq:KKT} always exists, given $\nu_{\ell,m}<\infty$ for all $\ell,m$.
Assuming that the feasible set is strictly feasible, we have $\lim_{\nu_{\ell,m}\rightarrow\infty}{h}_{\ell,m}(\bm{x}_{\ell}(\bm{\nu_{\ell}}))={h}_{\ell,m}\left(\exp{\left(j \angle\hat{\textbf{h}}_{\ell,m}\right)}\right)<0$ for any $\ell,m$.
Hence, there exists finite $\bm{\nu}_{\ell}$ that satisfies equation \eqref{eq:KKT}, leading to strong duality.
Using this fact, we focus on solving the dual problem rather than directly solving the primal problem.
Given the closed-form solution to the inner problem \eqref{eq:sol_to_x}, the dual problem \eqref{eq:MM_dual} can be reduced to finding optimal Lagrange multipliers $\bm{\nu}_{\ell}$ that satisfy conditions \eqref{ineq:dual_feasible} and \eqref{eq:KKT}.
With this in mind, the dual problem can be reformulated as
\begin{equation}\label{eq:iMM_dual2}
\begin{aligned}
& \underset{\bm{\nu}_{\ell}}{\sup}\ 
& &\bar{g}_{\ell}(\bm{\nu}_{\ell}) + \displaystyle\sum_{m=1}^{2K} \nu_{\ell,m} {h}_{\ell,m}(\bm{\nu}_{\ell})\\ 
& \text{s.t.} & & \nu_{\ell,m} \geq0 ,\ {h}_{\ell,m}(\bm{\nu}_{\ell})\leq 0, \ \forall m=1,2,\dots,2K \\
& & &\nu_{\ell,m}{h}_{\ell,m}(\bm{\nu}_{\ell})=0, \forall m =1,2,\dots,2K
\end{aligned}
\end{equation}
For ease of notation, $\bar{g}_{\ell}(\bm{x}_{\ell}(\bm{\nu}_{\ell}))$ and ${h}_{\ell,m}(\bm{x}_{\ell}(\bm{\nu}_{\ell}))$ are denoted by $\bar{g}_{\ell}(\bm{\nu}_{\ell})$ and ${h}_{\ell,m}(\bm{\nu}_{\ell})$, respectively.

\setlength{\textfloatsep}{0pt}
\begin{algorithm}
\DontPrintSemicolon
\SetNlSty{textbf}{\{}{\}}
\SetAlgoNlRelativeSize{0}
\SetNlSty{}{}{}

\caption{
$2K$-Dimension 
Bisection Method for Finding Dual Variables \label{alg:K-Bisection}}

\textbf{Input:} Lagrange multiplier vector {$\bm{\nu}_{\ell}$, stopping thresholds $\epsilon_2$, $\epsilon_3$, $\varepsilon$}

\textbf{Initialization:} $i=0$; $\bm{\nu}_{\ell}[0] = \bm{\nu}_{\ell}$, $\hat{g}_{\ell}[0]=\infty$;
With slight abuse of notation, ${h}_{\ell,m}(\nu')$ denotes ${h}_{\ell,m}(\bm{\nu}_{\ell})|_{\nu_{\ell,m}=\nu'}$\;

\Repeat{$|\hat{g}_{\ell}[i]-\hat{g}_{\ell}[i-1]|/|\hat{g}_{\ell}[i-1]|<\epsilon_2$}{
    \SetKwBlock{DoParallel}{do in parallel}{end}
    \For{$m=1:2K$}{
        \lIf{${h}_{\ell,m}(0)\leq 0$}{$\nu_{\ell,m}=0$}
        \ElseIf{$\lim_{\nu_{\ell,m} \rightarrow \infty}|h_{\ell,m}(\nu_{\ell,m})| \leq \varepsilon$}{Stop Algorithm \ref{alg:K-Bisection}}
        \Else{
            $\nu^l=0$, $\nu^u=1$;\;
            \lIf{${h}_{\ell,m}(\nu^u)\leq 0$}{$\nu^u=1$}
            \Else{
                \lRepeat{${h}_{\ell,m}(\nu^u)\leq0$}{$\nu^u=2\nu^u$}
                $\nu^l=\nu^u/2$\;
            }
            \Repeat{$|{h}_{\ell,m}(\nu_{\ell,m})+\epsilon_3/2|<\epsilon_3/2$}{
                $\nu_{\ell,m}=(\nu^l+\nu^u)/2$;\;
                \lIf{${h}_{\ell,m}(\nu_{\ell,m})>0$}{$\nu^l=\nu_{\ell,m}$}
                \lElse{$\nu^u=\nu_{\ell,m}$}
            }
        }
    }
    Update $i \gets i+1$,\ set $\bm{\nu}_{\ell}[i]=[\nu_1,\dots,\nu_{2K}]$\;
    Update $\hat{g}_{\ell}[i]=\bar{g}_{\ell}(\bm{\nu}_{\ell}\;[i])+\sum_{m=1}^{2K}\nu_{\ell,m}{h}_{\ell,m}(\bm{\nu}_{\ell}[i])$\;
}

{\textbf{Output:} Recover a solution $\bm{x}$ from $\bm{\nu}[i]$ and \eqref{eq:sol_to_x}}

\end{algorithm}
\setlength{\textfloatsep}{0pt}
\begin{algorithm}[t]
\DontPrintSemicolon
\SetNlSty{textbf}{\{}{\}}
\SetAlgoNlRelativeSize{0}
\SetNlSty{}{}{}

\caption{Proposed MM Algorithm\label{alg:MM}}

{\textbf{Input:} Initial point $\bm{x}_0$, stopping threshold $\epsilon_4$}

\textbf{Initialize:} Set $t=0$, $\bm{x}^{(t)}=\bm{x}_0$, $g[t]=\infty$\;

\Repeat{$|g[t]-g[t-1]|/|g[t-1]|\leq \epsilon_4$}{
    $t \gets t+1$\;
    Update $\bm{x}_{1},\dots,\bm{x}_L$ using \eqref{eq:majorizer} and Algorithm \ref{alg:K-Bisection}\;
    $\bm{x}^{(t)}\gets [\bm{x}_1^H,\bm{x}_2^H,\dots,\bm{x}_L^H]^H$ \; 
    $g[t]\gets\omega_{bp}\tilde{g}_{bp}(\bm{x}^{(t)})+\omega_{ac} g_{ac}(\bm{x}^{(t)})+\omega_{cc} g_{cc}(\bm{x}^{(t)})$\;
}

{\textbf{Output:} $\textbf{X}=\text{mat}(\bm{x}^{(t)})$}

\end{algorithm}

The problem \eqref{eq:iMM_dual2} can be solved via a coordinate ascent method where one Lagrange multiplier is optimized at a time with the other Lagrange multipliers fixed.
For updating each coordinate, we use a modified version of the bisection algorithm in \cite{he2022qcqp}, as described in Algorithm \ref{alg:K-Bisection}.
Once the Lagrange multiplier $\bm{\nu}_{\ell}$ is obtained, $\bm{x}_{\ell}$ can be recovered using \eqref{eq:sol_to_x}.
{
Note that $\bm{x}_{1},\bm{x}_{2},\dots,\bm{x}_{L}$ can be updated in parallel to accelerate the algorithm. 
}
The solution $\bm{x}_t$ for the $t$-th MM iteration can be obtained by concatenating the subvectors as $\bm{x}_t=[\bm{x}_1^H,\bm{x}_2^H,\dots,\bm{x}_L^H]^H$.
This iterative process continues until the objective value converges.
The final converged solution can be reshaped into a matrix as $\textbf{X}=\text{mat}(\bm{x}_t)$.
The overall iterative solution is described in Algorithm \ref{alg:MM}.



\subsection{Complexity, Convergence and Parallelization}\label{sec:subsec:mm_complexity}
Now we analyze the complexity of our proposed MM algorithm.
The proposed MM algorithm comprises the majorization process and the bisection algorithm for solving the dual problem.
The majorization process involves computation of the matrices $\bm{\Psi}$, $\bm{\Phi}$, and the vector $\textbf{d}$.
The matrix $\bm{\Psi}$ can be precomputed since it is independent of variable $\bm{x}_t$.
Thus, we focus on analyzing the complexity of computing $\bm{\Phi}$ and $\textbf{d}$.
The computation of ${\bm{\Phi}}$ requires the computations of ${\bm{\Phi}}_1$, ${\bm{\Phi}}_2$, ${\bm{\Phi}}_3$, which cost $O(UL^2N_T^2)$, $O(Q(2P-1)L^2N_T^2)$ and $O(Q(Q-1)(2P-1)L^2N_T^2/2)$, respectively.
The computation of $\textbf{d}$ involves evaluating $2(\bm{\Phi}-\text{diag}(\hat{\bm{\Phi}}))\bm{x}_t$, which costs $O(L^2N_T^2)$.
Thus, the overall computational complexity of the majorization process can be expressed as $O(UL^2N_T^2+Q^2PL^2N_T^2)$.

Next, we analyze the complexity of the bisection algorithm.
The bisection algorithm requires the evaluation of ${h}_{\ell,m}(\bm{\nu}_{\ell})$, which costs $O(N_T^2)$.
The considered bisection method terminates when the constraint $h_{\ell,m}(\bm{\nu}_{\ell})$ sufficiently approaches zero.
This differs from the traditional bisection method that terminates when the length of the search interval falls below a threshold.
Thus, it is difficult to acquire an analytical bound on the worst-case iteration number due to the nonlinear relationship between $h_{\ell,m}(\bm{\nu}_{\ell})$ and the Lagrange multiplier.
However, the combination of MM and the considered bisection methods has empirically shown superior convergence rates to the penalty convex–concave procedure (CCP) method and semi-definite relaxation (SDR) \cite{he2022qcqp}.

Although the proposed MM algorithm provides monotone convergence and admits parallel dual updates, it may suffer from the quadratic complexity scaling when the array size or block length is large. 
In the next section, we develop a low-complexity alternative for scenarios with large MIMO dimensions.

\section{Fast LADMM Algorithm}\label{sec:LADMM}

While the MM solution in Section \ref{sec:MM} yields desirable convergence properties like monotonic descent, its computational cost scales with the MIMO dimension, making it less attractive for massive arrays or long CPIs.
To support such large-scale regimes, we develop a low-complexity alternative based on ADMM.
Prior MIMO radar work \cite{cheng2017constant} converted the nonconvex quartic objective to biconvex quadratic subproblems and iteratively found the critical points of the quadratic subproblems. 
While convenient, this approach relies on matrix inversions, which may become computationally prohibitive in high dimensions.
To avoid these expensive operations, we employ linearized ADMM (LADMM), which has been shown to be effective for large-scale nonconvex QCQPs \cite{konar2017fast}.
LADMM replaces quadratic terms with first-order approximations, enabling simple proximal updates without matrix inversions while preserving ADMM's decomposability.

\vspace{-1mm}

\subsection{ADMM Transformation} 
First, we reformulate the problem in \eqref{eq:prob_formulation2} by introducing auxiliary variables $\bm{u}\in \mathbb{C}^{LN_T}$, $\bm{v}\in \mathbb{C}^{LN_T}$, and ${z}_{\ell,m}\in\mathbb{C}$ for $\ell=1,2,\dots,L$ and $m=1,2,\dots,2K$ as \vspace{-2mm}
\begin{equation}\label{eq:ADMM_prob}
\begin{aligned}
 \underset{\bm{x},\bm{u},\bm{v},\{\bm{z}_{\ell}\}_{\ell=1}^{L}}{\min}   g(\bm{x},\bm{v}) \\
\quad \text{s.t.} \quad 
\Re\{ {z}_{\ell,m}\} & \geq\tilde{\Gamma}_{m}, \, \forall \ell,m\\
& | u_{n} | = 1, \ \forall n \\
 \bm{x} & =\bm{v}, \;\; \bm{u}  =\bm{v}, \\
 {z}_{\ell,m} & = \tilde{\textbf{h}}_{\ell,m}^H\bm{x}, \, \forall \ell,m
\end{aligned}
\end{equation}
where $\bm{z}_{\ell} = [z_{\ell,1},z_{\ell,2},\dots,z_{\ell,2K}]$, \vspace{-2mm}
\begin{equation}
    \begin{aligned}  
    g(\bm{x},\bm{v})&\triangleq\omega_{bp}\tilde{g}_{bp}(\bm{x},\bm{v})  +\omega_{ac}g_{ac}(\bm{x},\bm{v})+\omega_{cc}g_{cc}(\bm{x},\bm{v}), \\
    \tilde{g}_{bp}(\bm{x},\bm{v})
    &\triangleq\displaystyle\sum_{u=1}^U|\bm{x}^H\textbf{B}_u\bm{v}|^2, \\ 
    g_{ac}(\bm{x},\bm{v})&\triangleq\displaystyle\sum_{q=1}^{Q}\displaystyle\sum_{\substack{\tau=-P+1, \\ \tau\neq 0}}^{P-1}\left|\bm{x}^H\textbf{D}_{\tau,q,q}\bm{v}\right|^2,\text{ and }\\
    g_{cc}(\bm{x},\bm{v}) &\triangleq\displaystyle\sum_{q=1}^{Q-1}\displaystyle\sum_{\substack{q'=1, \\ q'\neq q}}^Q\displaystyle\sum_{\tau=-P+1}^{P-1}\left|\bm{x}^H\textbf{D}_{\tau,q,q'}\bm{v}\right|^2.    
\end{aligned}
\end{equation}
By substituting one $\bm{x}$ with an auxiliary variable $\bm{v}$, the objective becomes bi-convex, i.e, convex in $\bm{x}$ with $\bm{v}$ fixed and in $\bm{v}$ with $\bm{x}$ fixed \cite{cheng2017constant} and an unconstrained quadratic problem with respect to $\bm{x}$ or $\bm{v}$.
Moreover, the constant modulus and QoS constraints are decoupled through the introduced auxiliary variables $\bm{u}$ and $\{\bm{z}_{\ell}\}_{\ell=1}^{L}$.
The scaled augmented Lagrangian function for \eqref{eq:ADMM_prob} can be rewritten as \vspace{-1mm}
\begin{equation*}
    \begin{aligned}
         &\mathcal{L}(\bm{x},\bm{v},\bm{u},\bm{z},\bm{\rho},{\bm{\eta}}_1,{\bm{\eta}}_2)=  g(\bm{x},\bm{v}) 
         \\  \nonumber    &
    +\frac{\mu_1}{2}(\left\Vert\bm{x}-\bm{v}+{{\bm{\eta}}}_1\right\Vert^2-\left\Vert{{\bm{\eta}}}_1\right\Vert^2) 
    \\     &
    +\frac{\mu_2}{2}(\left\Vert\bm{u}-\bm{v}+{\bm{\eta}}_2\right\Vert^2-\left\Vert {\bm{\eta}}_2\right\Vert^2)  
    \\ \nonumber
    &+\frac{\mu_3}{2}\left(\left\Vert \bm{z}-\tilde{\textbf{H}}\bm{x}+\bm{\rho}\right\Vert^2-\left\Vert \bm{\rho}\right\Vert^2\right), \vspace{-3mm}
    \end{aligned}
\end{equation*}
where $\mu_1, \mu_2,\mu_3\in \mathbb{R}^+$ are the scalar penalty parameters, $\bm{\eta}_1,\bm{\eta}_2\in \mathbb{C}^{LN_T \times 1}$ are the Lagrange multipliers for the equality constraints $\bm{x}=\bm{v}$ and $\bm{u}=\bm{v}$, respectively, $\rho_{\ell,m}\in \mathbb{C}^{}$ is the Lagrange multiplier for the equality constraint ${z}_{\ell,m}=\tilde{\textbf{h}}_{\ell,m}^H\bm{x}$ for all $\ell,m$, $\bm{z}=[\bm{z}^H_1,\bm{z}^H_2,\dots,\bm{z}^H_{L}]^H$ is a concatenated vector of $\bm{z}_1,\bm{z}_2,\dots,\bm{z}_{L}$, $\bm{\rho}=[\bm{\rho}_1^H,\bm{\rho}^H_2,\dots,\bm{\rho}^H_{L}]^H$ is the Lagrange multiplier vector for the equality constraints with $\bm{\rho}_\ell=[\rho_{\ell,1},\rho_{\ell,2},\dots,\rho_{\ell,2K}]^T$, and $\tilde{\textbf{H}}=[\textbf{H}^H_{1},\textbf{H}^H_2,\dots,\textbf{H}^H_{L}]^H$ is a stacked matrix with $\textbf{H}_{\ell}=[\tilde{\textbf{h}}_{\ell,1},\tilde{\textbf{h}}_{\ell,2},\dots,\tilde{\textbf{h}}_{\ell,2K}]^H$.
Accordingly, the problem \eqref{eq:ADMM_prob} can be decomposed into multiple subproblems and written in iterative form as
\begin{align}
 \bm{x}^{(i+1)}&:=\underset{\bm{x}}{\arg\min}  \ \mathcal{L}\left(\bm{x},[\bm{v},\bm{u},\bm{z},\bm{\rho},{\bm{\eta}}_1,{\bm{\eta}}_2]^{(i
)}\right), \label{eq:ADMM_x}\\ 
\bm{v}^{(i+1)}&:=\underset{\bm{v}}{\arg\min} \
 \mathcal{L}\left(\bm{x}^{(i+1)},\bm{v},[\bm{u},\bm{z},\bm{\rho},{\bm{\eta}}_1,{\bm{\eta}}_2]^{(i
)}\right), \label{eq:ADMM_v}\\
\bm{u}^{(i+1)}&:=\underset{\bm{u}\in\mathcal{R}_u}{\arg\min}
 \mathcal{L}\left([\bm{x},\bm{v}]^{(i+1)},\bm{u},[\bm{z},\bm{\rho},{\bm{\eta}}_1,{\bm{\eta}}_2]^{(i
)}\right), \label{eq:ADMM_u}\\
z_{\ell,m}^{(i+1)}&:=\underset{z_{\ell,m}\in\mathcal{R}_m}{\arg\min}
 \mathcal{L}
 \left([\bm{x},\bm{v},\bm{u}]^{(i+1)},\bm{z},[\bm{\rho},{\bm{\eta}}_1,{\bm{\eta}}_2]^{(i
)}\right), \label{eq:ADMM_z}
\\
{\bm{\eta}}_1^{(i+1)}&:={\bm{\eta}}_1^{(i)}+\bm{x}^{(i+1)}-\bm{v}^{(i+1)}, \\
{\bm{\eta}}_2^{(i+1)}&:={\bm{\eta}}_2^{(i)}+\bm{u}^{(i+1)}-\bm{v}^{(i+1)},
\\
\rho_{\ell,m}^{(i+1)}&:=\rho_{\ell,m}^{(i)}+z_{\ell,m}^{(i+1)}-\tilde{\textbf{h}}_{\ell,m}^H\bm{x}^{(i+1)},
\end{align}
where $i$ is the LADMM iteration index, $\mathcal{R}_u=\{\bm{u}:|\bm{u}_n|=1, \ \forall n=1,2,\dots,LN_T\}$, and $\mathcal{R}_m=\{{z}:\Re\{{z}\}\geq \tilde{\Gamma}_m\}$.

\subsection{Linearized Proximal Update of $\bm{x}^{(i+1)}$}
The subproblem \eqref{eq:ADMM_x} is an unconstrained quadratic optimization, which admits a closed-form solution via matrix inversion.
However, the complexity of matrix inversion increases cubically with the variable size $LN_T$, which can be unaffordable when the variable size is large.
To circumvent the computational burden, we employ a linearized update.
Instead of finding the critical point of $\mathcal{L}$ w.r.t. $\bm{x}$, we approximate the quadratic penalty term using a first-order Taylor expansion at the current point $\bm{x}^{(i)}$ and add a proximal term $\frac{\mu_x}{2}\Vert\bm{x}-\bm{x}^{(i)} \Vert^2$.
This allows subproblem \eqref{eq:ADMM_x} to be rewritten as
\begin{equation}\label{eq:prox_x}
    \bm{x}^{(i+1)}=\arg \min_{\bm{x}} \left\langle \nabla_{\bm{x}}\mathcal{L}^{(i)},\bm{x}-\bm{x}^{(i)} \right\rangle  +\frac{\mu_x}{2}\Vert\bm{x}-\bm{x}^{(i)} \Vert^2
\end{equation}
where $\mu_x$ is the proximal step size for $\bm{x}$ and $\nabla_{\bm{x}}\mathcal{L}^{(i)}$ is the gradient w.r.t. $\bm{x}$ at iteration $i$, which is given by
    \begin{equation}
    \begin{aligned}
        \nabla_{\bm{x}}\mathcal{L}^{(i)}\triangleq& \nabla_{\bm{x}}g\left(\bm{x}^{(i)},\bm{v}^{(i)}\right)+\mu_1\left(\bm{x}^{(i)}-\bm{v}^{(i)}+\bm{\eta}_1^{(i)}\right)\\
        &+\mu_3\tilde{\textbf{H}}^H\left(\tilde{\textbf{H}}\bm{x}^{(i)}-\bm{z}^{(i)}+\bm{\rho}^{(i)}\right).
    \end{aligned}    
\end{equation}
The computation of $\nabla_{\bm{x}}g\left(\bm{x}^{(i)},\bm{v}^{(i)}\right)$ is provided in Appendix \ref{sec:appendix:gradient}.
The approximated problem \eqref{eq:prox_x} yields a closed-form solution:
\begin{equation}
    \bm{x}^{(i+1)}=\bm{x}^{(i)}-\frac{1}{\mu_x}\nabla_{\bm{x}}\mathcal{L}^{(i)}.
\end{equation}


\subsection{Linearized Proximal Update of $\bm{v}^{(i+1)}$}
Similar to the subproblem \eqref{eq:ADMM_x}, the subproblem \eqref{eq:ADMM_v} for $\bm{v}$ is an unconstrained quadratic problem, which can be updated via a linearized update as
\begin{equation}
    \bm{v}^{(i+1)}=\bm{v}^{(i)}-\frac{1}{\mu_v}\nabla_{\bm{v}}\mathcal{L}^{(i)}
\end{equation}
where $\mu_v$ is the proximal step size for $\bm{v}$ and $\nabla_{\bm{v}}\mathcal{L}^{(i)}$ is the gradient w.r.t. $\bm{v}$ at iteration $i$, which is given by
    \begin{equation}
    \begin{aligned}
        \nabla_{\bm{v}}\mathcal{L}^{(i)}\triangleq& \nabla_{\bm{v}}g\left(\bm{x}^{(i+1)},\bm{v}^{(i)}\right)+\mu_2\left(\bm{v}^{(i)}-\bm{u}^{(i)}-\bm{\eta}_2^{(i)}\right).
    \end{aligned}    
\end{equation}
The computation of $\nabla_{\bm{v}}g\left(\bm{x}^{(i+1)},\bm{v}^{(i)}\right)$ is provided in Appendix \ref{sec:appendix:gradient}.

   
\subsection{Update of ${z}_{\ell,m}^{(i+1)}$}
Next, ignoring the irrelevant variables, the subproblem \eqref{eq:ADMM_z} for the auxiliary variable $z_{\ell,m}$ can be rewritten as 
\begin{equation}\label{eq:ADMM_sub_prob3}
\underset{{z}_{\ell,m}\in\mathcal{R}_m}{\min}
\ \left|{z}_{\ell,m}-\tilde{\textbf{h}}_{\ell,m}^H\bm{x}^{(i+1)}+{{\rho}}^{(i)}_{\ell,m}\right|^2.
\end{equation}
The above subproblem is convex due to the convex objective and constraint.
Thus, the closed-form solution can be readily obtained from the Karush-Kuhn-Tucker (KKT) condition as
\begin{equation}\label{eq:ADMM_sub_prob3_sol}
    {z}_{\ell,m}=\begin{cases}
         \tilde{\textbf{h}}_{\ell,m}^H\bm{x}^{(i+1)}-{\rho}^{(i)}_{\ell,m},\  \text{if }  \Re\{\tilde{\textbf{h}}_{\ell,m}^H\bm{x}^{(i+1)}-\rho^{(i)}_{\ell,m}\} \geq \tilde{\Gamma}_{m}\\      
         {\begin{array}{c}        {\tilde{\textbf{h}}_{\ell,m}^H\bm{x}^{(i+1)}-{\rho}^{(i)}_{\ell,m}+\tilde{\Gamma}_{m}} \\
        {-\Re\{\tilde{\textbf{h}}_{\ell,m}^H\bm{x}^{(i+1)}-{\rho}^{(i)}_{\ell,m}\}}
        \end{array},}
    \text{ otherwise}. 
    \end{cases}
\end{equation}

\subsection{Update of $\bm{u}^{(i+1)}$}

The subproblem \eqref{eq:ADMM_u} for the auxiliary variable $\bm{u}$ can be simplified as
\begin{equation}\label{eq:ADMM_sub_prob4}
\begin{aligned}
& \underset{{\bm{u}}}{\min}
& &  \left\Vert \bm{u}-\bm{v}^{(i+1)}+{\bm{\eta}}_2^{(i)}\right\Vert  \\
& \text{s.t.}
& & |u_n|=1, \ \forall n=1,2,\dots,LN_T.
\end{aligned}
\end{equation}
The solution to \eqref{eq:ADMM_sub_prob4} is given by
$\bm{u}^{(i+1)}=e^{j\angle({\bm{v}^{(i+1)}-{\bm{\eta}}_2^{(i)})}}$, which was proven in \cite{he2022qcqp}.
{
}

The subproblems can be iteratively solved until the stopping criterion is satisfied.
Then, we can recover the converged solution by reshaping the vector $\bm{x}$ into the matrix $\textbf{X}$, as described in Algorithm \ref{alg:LADMM}.


\setlength{\textfloatsep}{0pt}
\begin{algorithm}
\caption{Proposed LADMM Algorithm}
\label{alg:LADMM}
{\textbf{Input:} Initial point $\bm{x}_0$, stopping threshold $\epsilon_1$ 
}

\textbf{Initialize:}
$i \gets 0$, $g[i]=\infty$, 
$\bm{x}^{(i)}=\bm{x}_0$,
 ${\bm{\eta}}_1^{(i)}={\bm{\eta}}_2^{(i)}=\textbf{0}_{LN_T\times 1}$,  $\rho_{\ell,m}^{(i)} = 0$, ${z}_{\ell,m}^{(i)} = \Re\{\tilde{\textbf{h}}_{\ell,m}^H\bm{x}^{(i)}\}$, $\bm{v}^{(i)} = \bm{x}^{(i)}$, $\bm{u}^{(i)} = \bm{v}^{(i)}$ \\

\Repeat{$|g[i]-g[i-1]|/|g[i-1]|\leq \epsilon_1$}{
update $\bm{x}^{(i+1)}$, $\bm{v}^{(i+1)}$ and ${z}_{\ell,m}^{(i+1)}$
\\
update $\bm{u}^{(i+1)} \gets e^{j\angle({\bm{v}^{(i+1)}-{\bm{\eta}}_2^{(i)})}}$\\
update ${\bm{\eta}}_1^{(i+1)}\gets{\bm{\eta}}_1^{(i)}+\bm{x}^{(i+1)}-\bm{v}^{(i+1)}$\\
update ${\bm{\eta}}_2^{(i+1)}\gets{\bm{\eta}}_2^{(i)}+\bm{u}^{(i+1)}-\bm{v}^{(i+1)}$\\
update ${{\rho}}_{\ell,m}^{(i+1)}\gets {{\rho}}_{\ell,m}^{(i)}+{z}_{\ell,m}^{(i+1)}-\tilde{\textbf{h}}^H_{\ell,m}\bm{x}^{(i+1)}$\\
set $i \gets i+1$ \\
$g[i]\gets\omega_{bp}\tilde{g}_{bp}(\bm{x}^{(i)})+\omega_{ac} g_{ac}(\bm{x}^{(i)})+\omega_{cc} g_{cc}(\bm{x}^{(i)})$\
}
{\textbf{Output:} $\textbf{X}=\text{mat}(\bm{x}^{(i)})$}

\end{algorithm}


\subsection{Complexity Analysis}\label{sec:subsec:admm_complexity}


Each LADMM iteration requires updating variables $\bm{x}, \bm{v},\bm{u},\bm{z},\bm{\rho},\bm{\eta}_1,\bm{\eta}_2$.
Unlike standard ADMM, the proposed LADMM algorithm avoids the computationally expensive matrix inversion $\mathcal{O}(N_T^3 L^3)$ through linearized proximal updates.
Consequently, the computational complexity is dominated by the gradient calculation $\nabla_{\bm{x}}\mathcal{L}_{\rho}$ in the $\bm{x}$ and $\bm{v}$ update steps.
The gradient computation for the beam pattern cost function involves matrix multiplications between the steering matrix and the waveform matrix, with a complexity of $\mathcal{O}(U N_T L)$.
For the ISL cost functions, we project the waveform onto the steering vectors and then utilize the fast Fourier transform (FFT) to accelerate the convolution operations.
Calculating the auto-correlation gradients requires $\mathcal{O}(QN_TL+QL \log L)$ operations, while the cross-correlation gradients, which involve interactions between all target pairs, scale as $\mathcal{O}(Q(Q-1)L (\log L + N_T))$.
The gradient contributions from the communication constraint involve matrix-vector products with the channel matrix $\tilde{\textbf{H}}$, costing $\mathcal{O}(K N_T L)$.
The updates for the auxiliary variables $\bm{z}$ and $\bm{u}$ involve simple element-wise projections with linear complexities of $\mathcal{O}(KL)$ and $\mathcal{O}(N_T L)$, respectively.
Combining these results, the total computational complexity of each LADMM iteration is given by $ \mathcal{O}(U N_T L + Q^2 L (\log L + N_T))$.
This represents a significant reduction compared to the cubic and quadratic scaling of the standard ADMM and the MM solution, respectively, making the proposed algorithm scalable for larger variable sizes $LN_T$.

\section{Simulation Results}
\label{sec:sims}
In this section, we evaluate the proposed algorithms through simulations.
We use the following setting unless otherwise specified.
The waveform contains 32 subpulses, i.e., $L=32$, and the largest range bin of interest is $P=8$ \cite{wang2012design}.
Also, the transmit power is $P_T=1$ 
and the noise variance for the communication users $\sigma^2=0.01$ \cite{liuJointTransmitBeamforming2020a}.
The transmit array is equipped with $N_T=8$ antennas with half-wavelength spacing \cite{wang2012design}.
We consider the uncorrelated Rayleigh channel for the communication channel of each user.
{
\color{black}
We use 500 channel realizations to evaluate the average performance of the proposed algorithms unless otherwise specified.
}
We set the discretized angle range to be $[0^{\circ},180^{\circ}]$ with the angle resolution of $0.5^{\circ}$, i.e., $\theta_u=(u/2)^{\circ}$ for $u=1,2,\dots,360$. 
For the reference beam pattern, we consider a rectangular beam pattern, which is given by 
\begin{equation}
    G_d(\theta)=
    \begin{cases}
    1,& \text{if } \theta_q - \Delta_{\theta}/2\leq\theta\leq  \theta_q + \Delta_{\theta}/2 \ \forall q,\\
    0,              & \text{otherwise},
\end{cases}
\end{equation}
where $\Delta_{\theta}$ is the beam width.
We consider two target directions, i.e, $Q=2$ each at angles $\theta_1=-30^{\circ}$ and $\theta_2=40^{\circ}$.
The beam width $\Delta_{\theta}$ is set to $20^{\circ}$.
The termination thresholds are set to $\epsilon_1=10^{-4}$, $\epsilon_2=\epsilon_3 = 10^{-4}$, and $\epsilon_4=3 \times 10^{-6}$.
We configure the penalty parameters for the LADMM algorithm as $\mu_1=\mu_2=\mu_3=10^4$.

For baselines, we use a radar-only scheme that solves \eqref{eq:prob_formulation2} without the communication constraints, to verify the radar-communication trade-off.
Also, we compare the proposed algorithm to the algorithm in \cite{liuDualFunctionalRadarCommunicationWaveform2021}, which optimizes the beam pattern shaping cost on a symbol-by-symbol basis without suppressing correlations between symbols under a per-user CI constraint.

{
\color{black}
Initialization significantly impacts the convergence speed of the proposed algorithms.
Thus, we solve the following problem to find an initial point for the proposed algorithms: 
\begin{equation}
\begin{aligned} 
& \underset{\bm{x},\varphi}{\max}
& &  \varphi \\ 
& \text{s.t.}
& &  \Re\{\tilde{\textbf{h}}_{\ell,m}^H\bm{x}\}\geq \varphi, \ \forall \ell,m\\
& & & | {x}_{n} | \leq 1, \ \forall n=1,2,\dots,LN_T.
\end{aligned}
\end{equation}
The above problem is convex, which can be solved using numerical tools like CVX.
}

\begin{figure}
     \centering
     \begin{subfigure}[b]{0.24\textwidth}
         \centering
            \center{\includegraphics[width=.95\linewidth]{./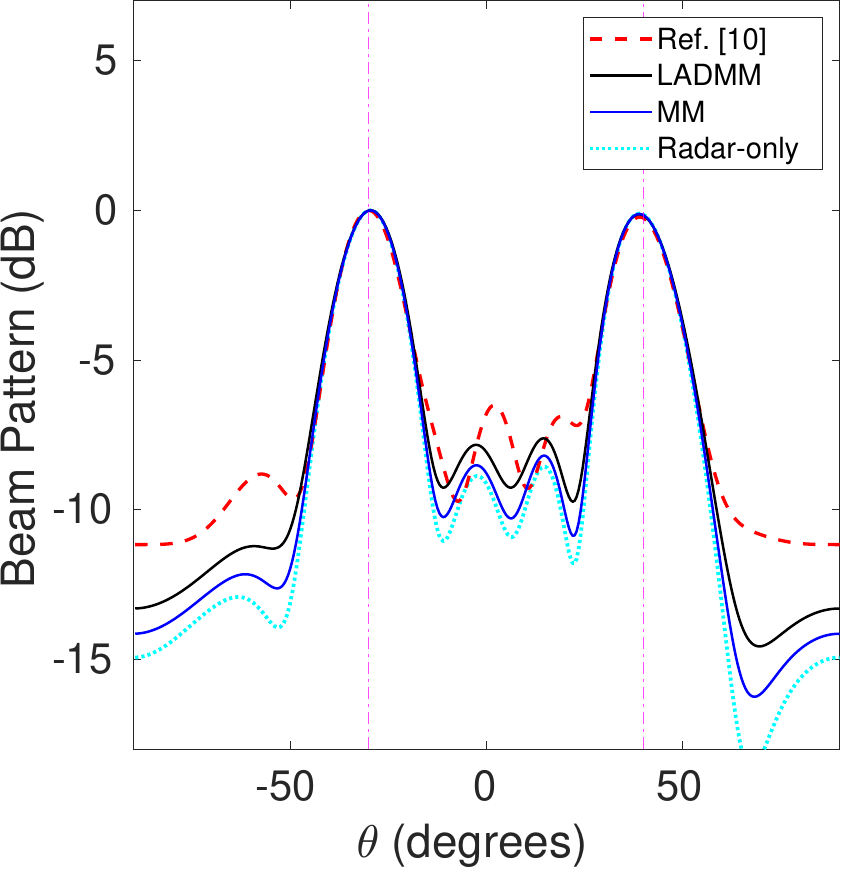}}
            \caption{ $K=2, \gamma_k=\SI{6}{\decibel}$.}
            \label{fig:Beam_Pattern1}
         \end{subfigure}
     \hfill
     \begin{subfigure}[b]{0.24\textwidth}
         \centering
            \center{\includegraphics[width=.95\linewidth]{./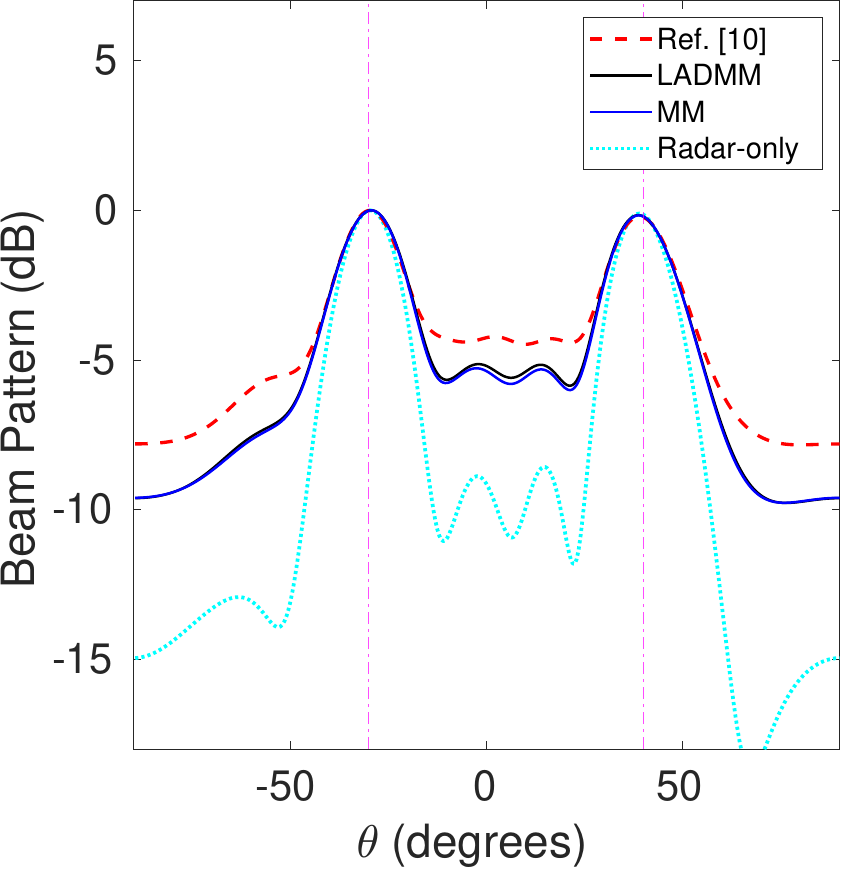}}
            \caption{ $K=4, \gamma_k=\SI{12}{\decibel}$.}
            \label{fig:Beam_Pattern2}
         \end{subfigure}
     \caption{\small  Synthesized beam patterns for two communication parameter sets.
     }
     \label{fig:Beam_Pattern}
\end{figure}



Figs. \ref{fig:Beam_Pattern1} and \ref{fig:Beam_Pattern2} compare the beam patterns designed by the proposed algorithms, the per-symbol design \cite{liuDualFunctionalRadarCommunicationWaveform2021}, and the radar-only scheme, for $K=2, \gamma_k=\SI{6}{\decibel}$ and $K=4, \gamma_k=\SI{12}{\decibel}$. 
The weights for the cost functions are $(\omega_{bp},\omega_{ac},\omega_{cc})=(1,4,4)$.
For both communication configurations, the radar-only scheme outperforms DFRC schemes in beam pattern approximation because it has no communication constraints.
When $K=2,\gamma_k=\SI{6}{\decibel}$, the beam patterns of the proposed methods approach that of the radar-only scheme, while the per-symbol design baseline suffers from relatively higher sidelobe levels.
The baseline \cite{liuDualFunctionalRadarCommunicationWaveform2021} focuses on the symbol-by-symbol beam pattern shaping, which can be seen as a myopic approach.    
In contrast, our approach optimizes the average beam pattern for the entire block, resulting in lower spatial sidelobes.
When $K=4,\gamma_k=\SI{12}{\decibel}$, we observe a similar trend where the proposed approach maintains lower sidelobes than the baseline \cite{liuDualFunctionalRadarCommunicationWaveform2021}. 
The overall sidelobes levels increased compared to the previous figure, except for the radar-only scheme.
This suggests that the difficulty of beam pattern shaping increases as communication requirements become more demanding. 
For both cases, the MM solution outperforms the LADMM solution in terms of beam pattern approximation.

\begin{figure}
     \centering
     \begin{subfigure}[b]{0.24\textwidth}
     \centering
            {\includegraphics[width=.95\linewidth]{./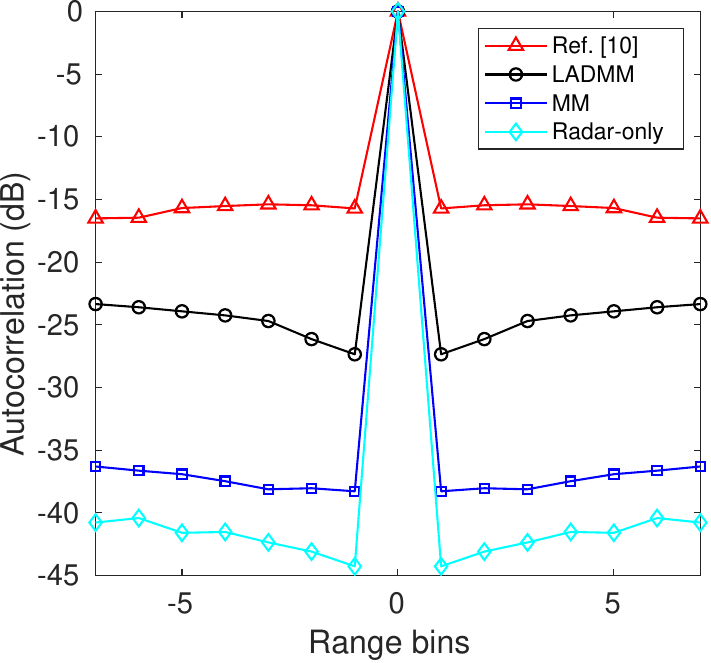}}
            \caption{ 
            $K=2$, $\gamma_k=\SI{6}{\decibel}$.
            }
            \label{fig:AC1_Set1}
     \end{subfigure}
     \hfill
     \begin{subfigure}[b]{0.24\textwidth}
            \center{\includegraphics[width=.95\linewidth]{./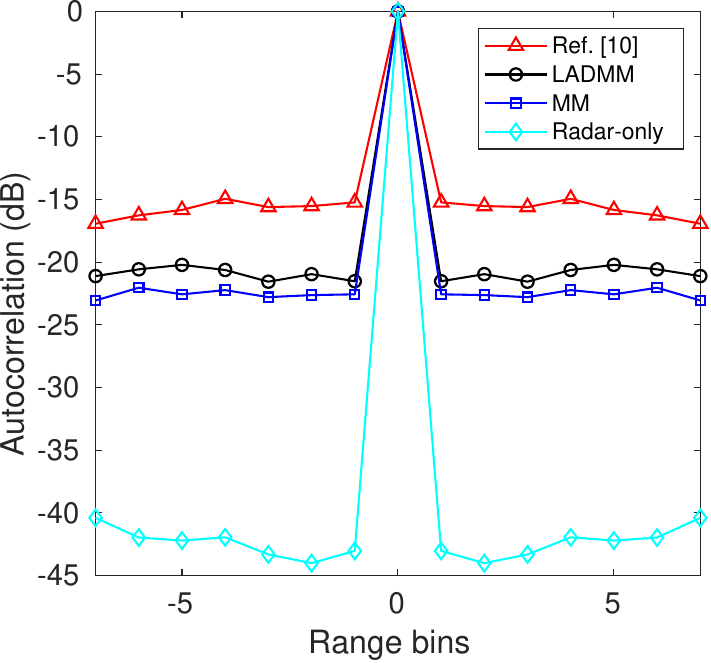}}
            \caption{ 
            $K=4$, $\gamma_k=\SI{12}{\decibel}$.
            }
            \label{fig:AC1_Set2}
     \end{subfigure}
     \begin{subfigure}[b]{0.24\textwidth}
     \centering
             {\includegraphics[width=.95\linewidth]{./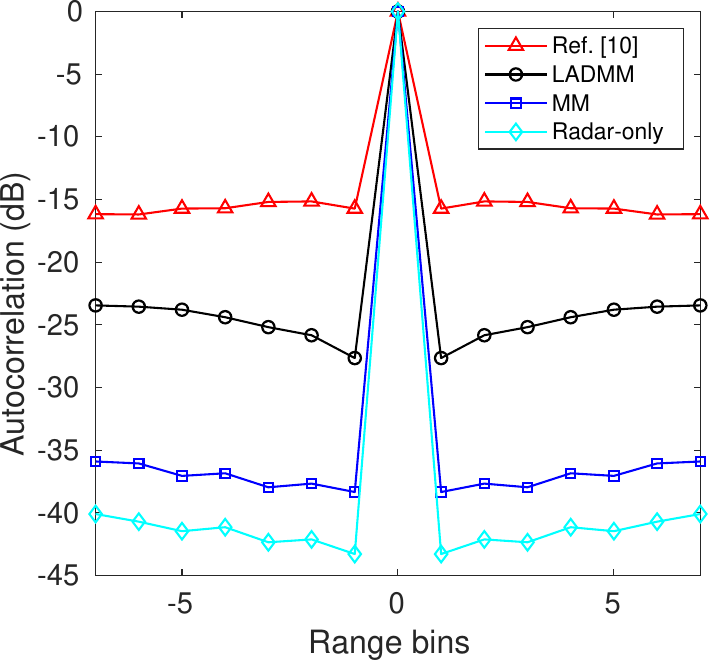}}
             \caption{ 
             $K=2$, $\gamma_k=\SI{6}{\decibel}$.
             }\label{fig:AC2_Set1}
     \end{subfigure}     
     \hfill
     \begin{subfigure}[b]{0.24\textwidth}
             \center{\includegraphics[width=.95\linewidth]{./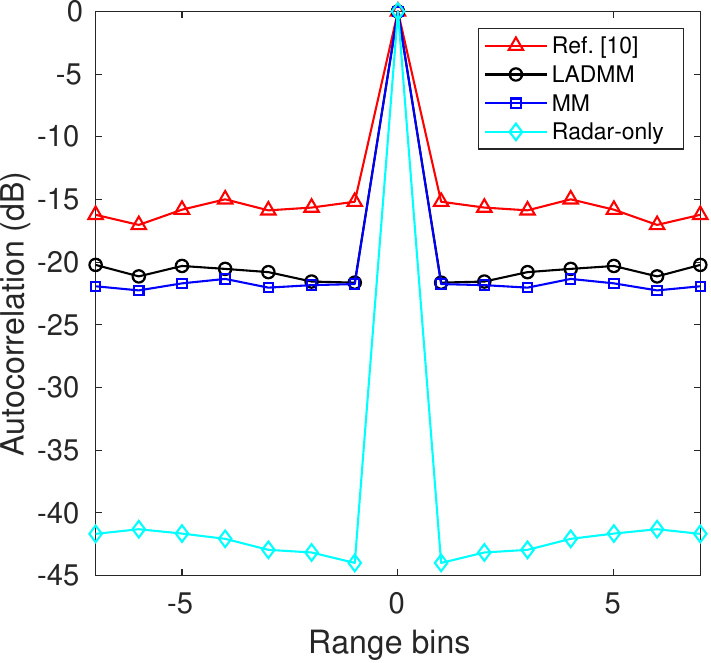}}
             \caption{
             $K=4$, $\gamma_k=\SI{12}{\decibel}$.
             } \label{fig:AC2_Set2}
     \end{subfigure}
     
     \caption{\small 
     Autocorrelation at target angles (\ref{fig:AC1_Set1})(\ref{fig:AC1_Set2}) $\theta_1=-30^{\circ}$ 
     and (\ref{fig:AC2_Set1})(\ref{fig:AC2_Set2}) $\theta_2=40^{\circ}$
     for two communication parameter sets. 
     }
     \label{fig:auto}
\end{figure}

\begin{figure}
     \centering
          \begin{subfigure}[b]{0.24\textwidth}
          \centering
             {\includegraphics[width=.95\linewidth]{./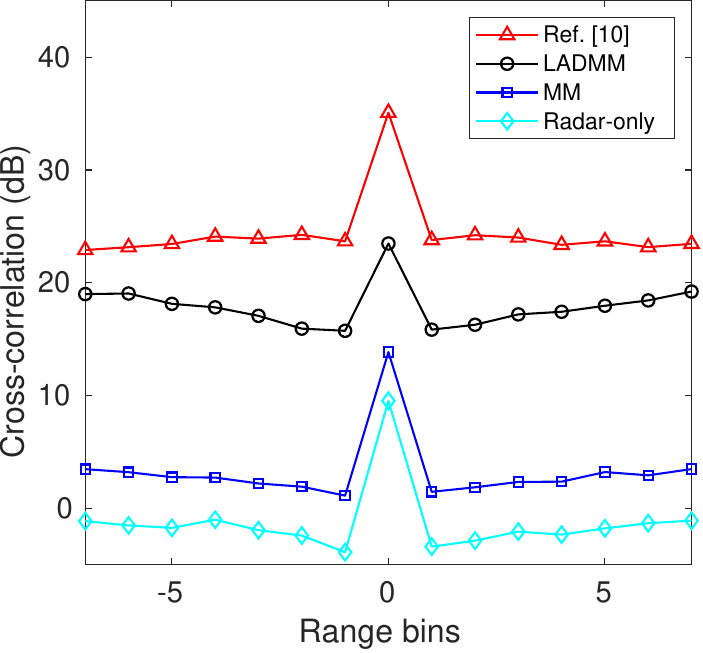}}
             \caption{
             $K=2$, $\gamma_k=\SI{6}{\decibel}$.
             } \label{fig:CC_Set1}
     \end{subfigure}
      \begin{subfigure}[b]{0.24\textwidth}
             \center{\includegraphics[width=.95\linewidth]{./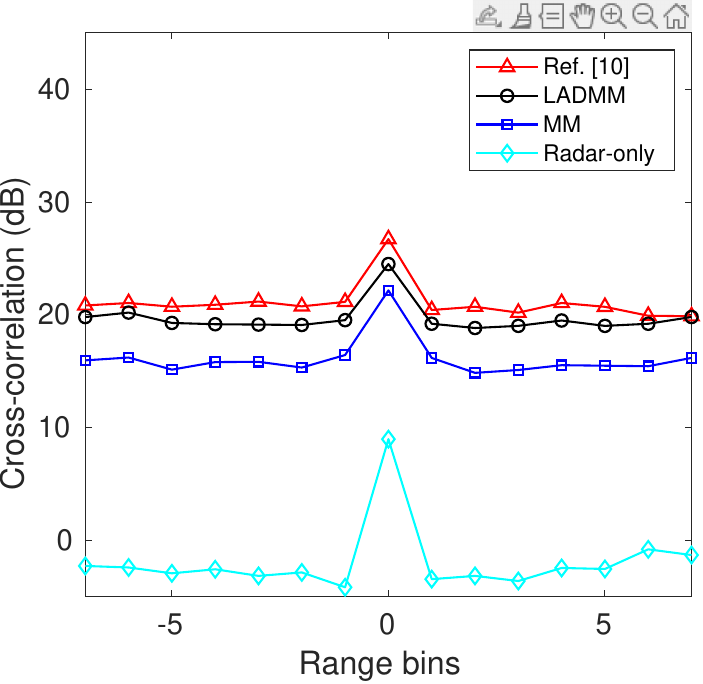}}
             \caption{
             $K=4$, $\gamma_k=\SI{12}{\decibel}$
             } \label{fig:CC_Set2}
     \end{subfigure}
     
     \caption{\small Cross-correlation between $\theta_1=-30^{\circ}$ and $\theta_2=40^{\circ}$ for two communication parameter sets.
     }
     \label{fig:cross}
\end{figure}

\begin{figure}
     \centering
     \begin{subfigure}[b]{0.24\textwidth}
         \centering
            \center{\includegraphics[width=.99\linewidth]
            {./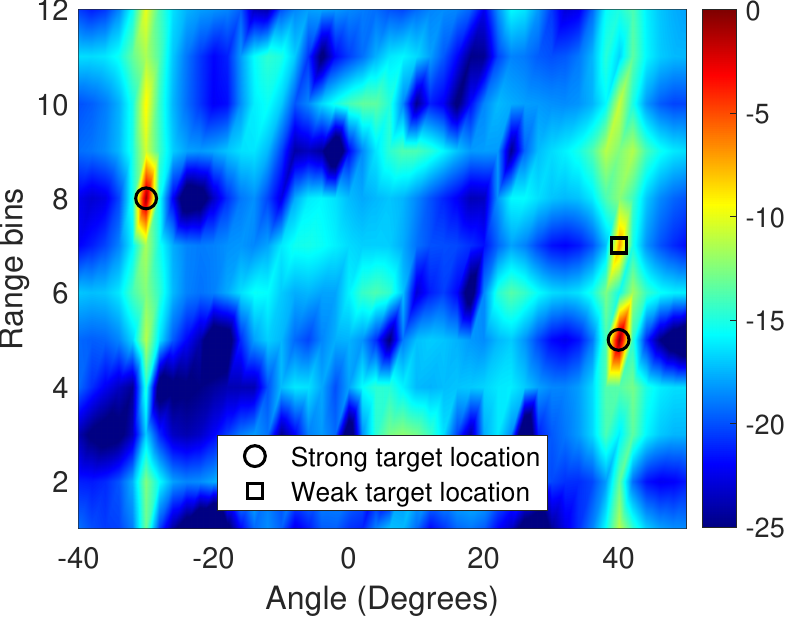}}
            \caption{\small LADMM (Proposed).}
            \label{fig:capon_BLP}
         \end{subfigure}
     \begin{subfigure}[b]{0.24\textwidth}
         \centering
            \center{\includegraphics[width=.99\linewidth]
            {./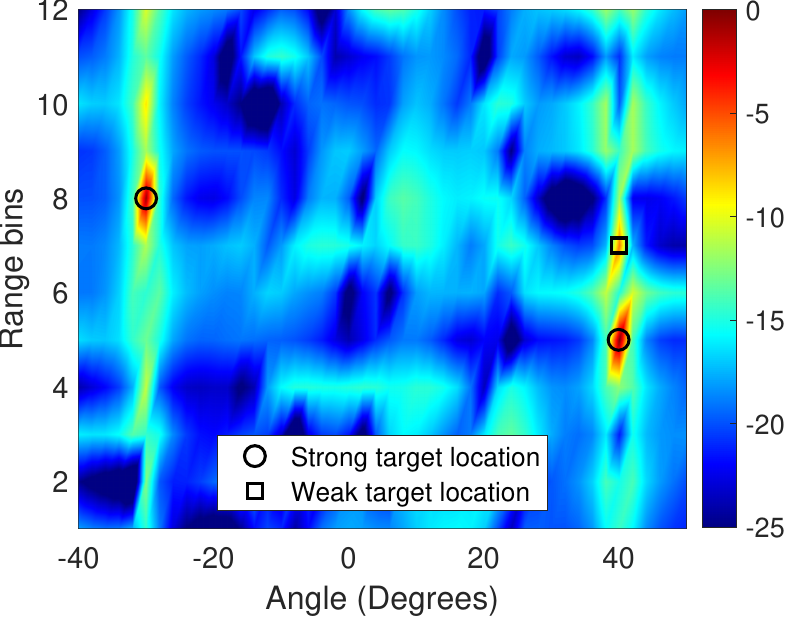}}
            \caption{\small MM (Proposed).}
            \label{fig:capon_BLP_MM}
         \end{subfigure}
     \begin{subfigure}[b]{0.24\textwidth}
         \centering
            \center{\includegraphics[width=.99\linewidth]
            {./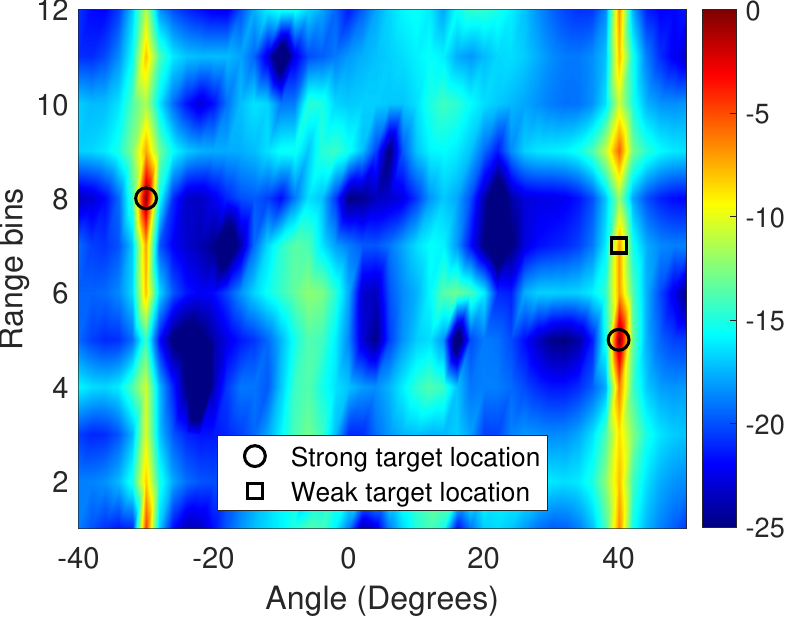}}
            \caption{ \small 
            Per-symbol design \cite{liuDualFunctionalRadarCommunicationWaveform2021}.
            }
            \label{fig:capon_SLP}
         \end{subfigure}
         \begin{subfigure}[b]{0.24\textwidth}
         \centering
            \center{\includegraphics[width=.99\linewidth]
            {./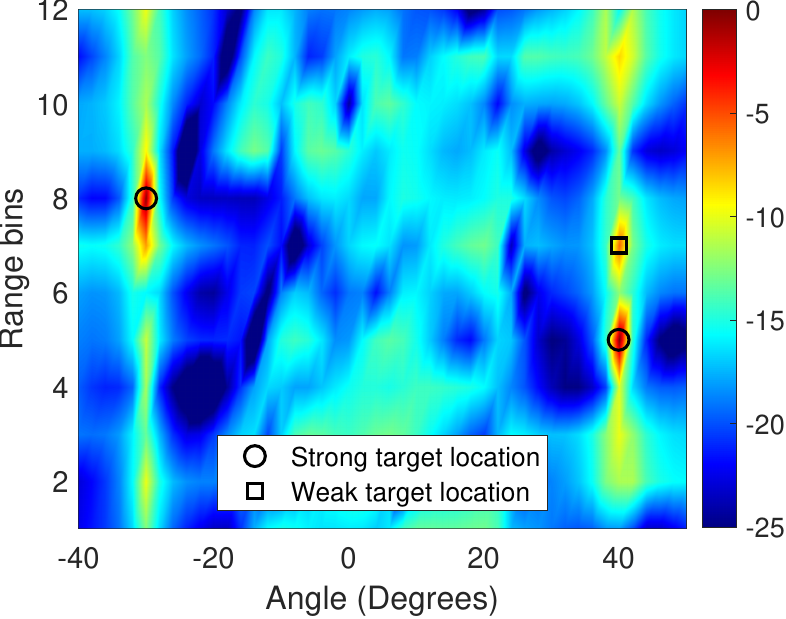}}
            \caption{ \small 
            Radar only.
            }
            \label{fig:capon_radar}
         \end{subfigure}
     \caption{\small Capon spectral images of the proposed waveforms and baselines in the angle and range domain for $K=2$ and $\Gamma=\SI{6}{\decibel}$.
     A weak target is placed at $(\theta_1=40^{\circ},\tau_1=7)$ and two strong targets are placed at $(\theta_2=-30^{\circ},\tau_2=8)$ and $(\theta_3=40^{\circ},\tau_3=5)$.
     }
     \label{fig:capon}
\end{figure}


{
Next, we evaluate the waveform correlation properties using the same setup described for Fig. \ref{fig:Beam_Pattern}.
Figs. \ref{fig:auto} and \ref{fig:cross} plot the autocorrelation and cross-correlation performance of the proposed method and baselines.
In all cases, the radar-only scheme outperforms the DFRC schemes in autocorrelation and cross-correlation, for the same reason as Fig. \ref{fig:Beam_Pattern}.
The per-symbol beam pattern design \cite{liuDualFunctionalRadarCommunicationWaveform2021} demonstrates the highest autocorrelation/cross-correlation sidelobe levels since it does not address waveform correlations.
In contrast, the proposed approach effectively reduces sidelobes owing to block-level ISL minimization.
It is important to note that the MM algorithm nearly matches the sidelobe suppression performance of the radar-only scheme when $K=2,\gamma_k=\SI{6}{\decibel}$, yielding a roughly $\SI{20}{\decibel}$ sidelobe reduction compared to the per-symbol design.
When $K=4,\gamma_k=\SI{12}{\decibel}$, the overall sidelobe levels of our approach become higher.
This implies suppressing sidelobes becomes harder as the communication requirements become tighter, accounting for the radar-communication trade-off.
Despite this, the proposed approach outperforms the baseline \cite{liuDualFunctionalRadarCommunicationWaveform2021} in terms of correlation for any configuration. 
Additionally, the MM solution achieves slightly lower sidelobe levels than the LADMM solution, consistent with the earlier results.
}

{
\color{black}
We perform a Capon spectral analysis \cite{xuRadarImagingAdaptive2006} to assess the positioning performance of the proposed waveforms.
The weights for the cost functions are set to $(\omega_{bp},\omega_{ac},\omega_{cc})=(1,10,10)$.
For each angle-range pair, we averaged the Capon estimates over 1000 noise realizations per channel realization.
We configured a weak target at $(\theta_1,\tau_1)=(40^{\circ},7)$ and two strong targets at $(\theta_2,\tau_2)=(-30^{\circ},8)$, $(\theta_3,\tau_3)=(40^{\circ},5)$.
We set the RCS of the strong target to be $\SI{6}{\decibel}$ higher than that of the weak target.

Figs. \ref{fig:capon_BLP} and \ref{fig:capon_SLP} illustrate the Capon estimates at different angle and range bins, generated by the proposed algorithms, respectively, for $K=2$ and $\Gamma=\SI{6}{\decibel}$.
All values are normalized to the maximum Capon amplitude and then converted to the dB scale.
{
From all results, two strong peaks appear at the strong target locations.
By contrast, the weak target shows a weaker response in all four images.
The per-symbol design result contains two strong peaks but with a broader spread compared to the radar-only and proposed schemes due to higher cross- and autocorrelations.
As a result, the weak target is masked by the strong targets' sidelobes, blocking its detection. 
By contrast, the proposed waveforms eliminate any false peaks and reduce dispersion around the targets due to their sharper mainlobes and reduced sidelobes.
}


}

\begin{figure}[t]
 \centering
    \center{\includegraphics[width=.7\linewidth]{./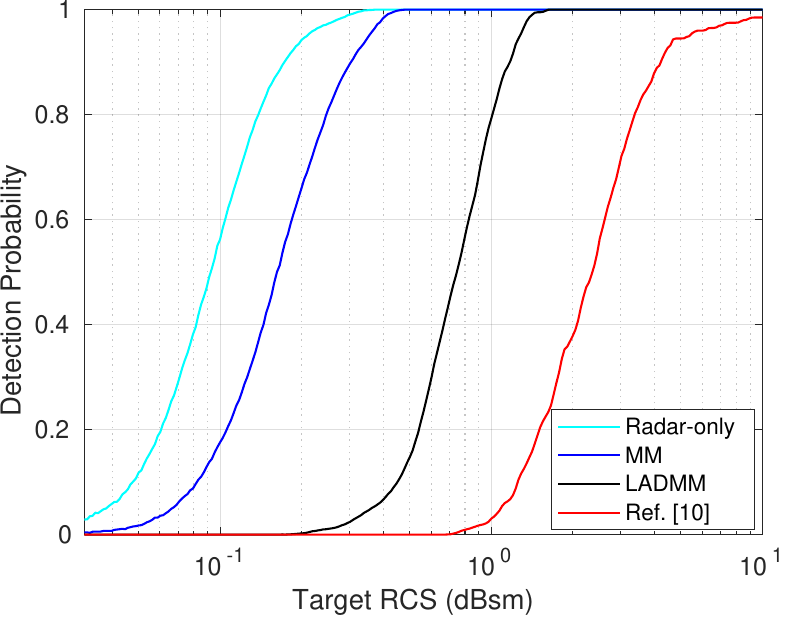}}
    \caption{\small 
    Detection probability of the target at $(\theta_1,\tau_1)=(40^{\circ},7)$ with varying target RCS values. The clutter setting remains the same as Fig. \ref{fig:capon}.
    }
    \label{fig:Pd}
 \end{figure}
{\color{black}

{
We now evaluate the target detection performance of the proposed waveforms for $K=2$ and $\gamma_k=\SI{6}{\decibel}$.
We apply a 1D cell-averaging constant false alarm rate (CA-CFAR) detector with a desired false alarm rate $P_{fa}=10^{-2}$.
Fig. \ref{fig:Pd} shows the detection probability for the target located at $(\theta_1,\tau_1)=(40^{\circ},7)$ with varying RCS values.
The MM algorithm closely approaches the radar-only performance and begin detecting the target at a very low RCS level.
By contrast, the per-symbol design fails to detect the target until the target's RCS becomes about $\SI{0}{\decibel sm}$, indicating that the weak target is masked by returns from adjacent
strong targets when correlation sidelobes are not explicitly controlled.
These results show that explicit correlation shaping in the proposed designs substantially improves weak-target sensitivity in multi-target scenarios.

}


}

\color{black}

{\color{black}
}

{

\begin{figure}[t]
 \centering
    \center{\includegraphics[width=.66\linewidth]{./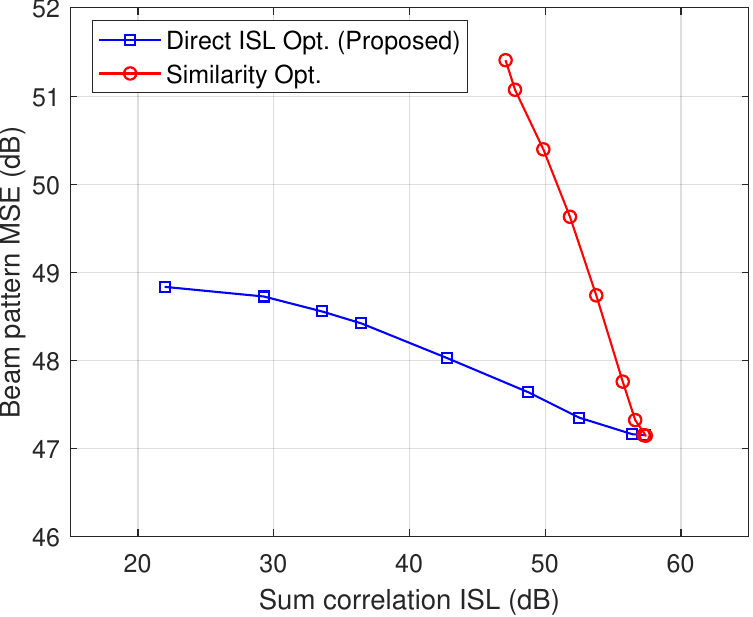}}
    \caption{\small Beam pattern MSE vs sum correlation ISL trade-off for the direct (ours) and indirect ambiguity shaping for $K=4$ and $\gamma_k=\SI{12}{\decibel}$.}
    \label{fig:trade_off}
    \vspace{-1mm}
 \end{figure}
We compare the trade-off between the beam pattern MSE and ISL cost functions for the direct (ours) and indirect (similarity-based) ambiguity function shaping approaches.
For the indirect approach, we consider the joint optimization of the beam pattern and similarity metrics.
For the indirect correlation shaping benchmark, we adopt the angular similarity metric, which is given by \cite{wen2023transmit}
$g_{sim}(\bm{x}) =\sum_{q=1}^Q \Vert \textbf{X}^H\textbf{a}(\theta_q)-\bm{x}_{ref}\Vert^2=\sum_{q=1}^Q \Vert \textbf{a}(\theta_q)\bm{x}-\bm{x}_{ref}\Vert^2$
where $\bm{x}_{ref}$ is a reference waveform (e.g., chirp).
The angular similarity metric is designed to quantify the similarity between the waveform radiated at the target angle and the reference sequence $\bm{x}_{ref}$.
We adopted a linear frequency-modulated (LFM) sequence as the reference waveform.
The $\ell$th entry of $\bm{x}_{ref}$ is given by  $x_{ref,\ell} =e^{j\pi (\ell-1)^2/L}$.
We minimize the weighted sum of the beam pattern cost function $\tilde{g}_{bp}(\bm{x})$ and the angular similarity metric $g_{sim}(\bm{x})$ using the proposed MM algorithm.
We set $K=4$ and $\gamma_k=\SI{12}{\decibel}$.


\begin{figure}[!t]
\centering
\begin{subfigure}[b]{0.158\textwidth}         
            \center{\includegraphics[width=.99\linewidth]{./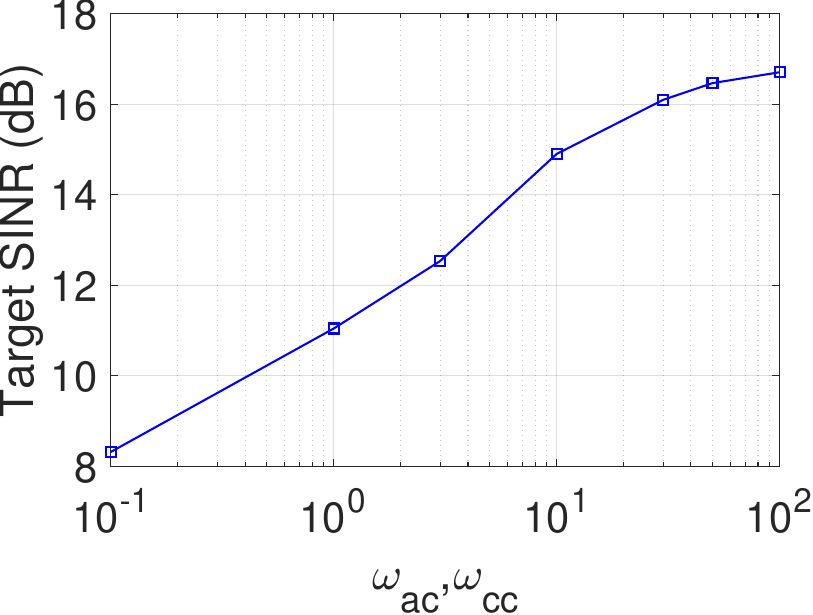}}
            \caption{ \small Scenario 1}
            \label{fig:clutter}
     \end{subfigure}
     \begin{subfigure}[b]{0.158\textwidth}         
            \center{\includegraphics[width=.99\linewidth]{./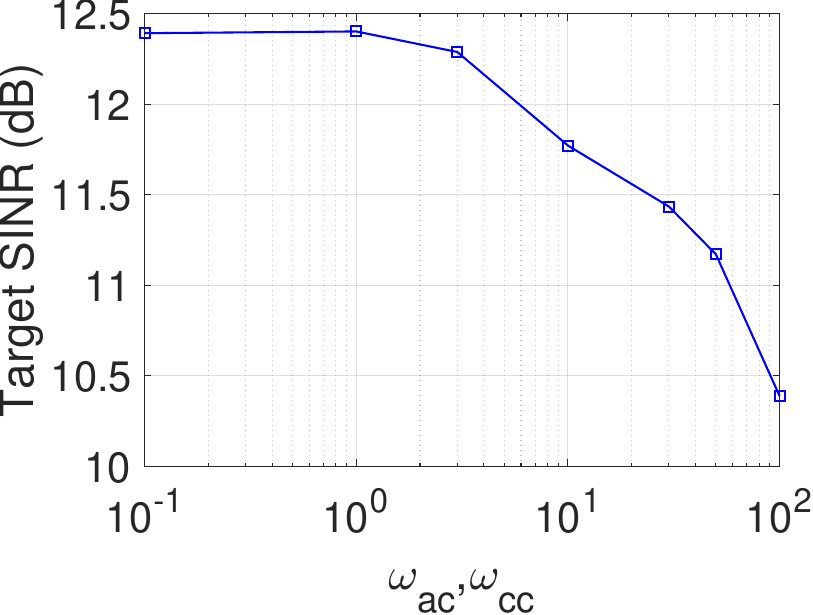}}
            \caption{ \small Scenario 2}
            \label{fig:jammer}
     \end{subfigure}
     \begin{subfigure}[b]{0.158\textwidth}         
            \center{\includegraphics[width=.99\linewidth]{./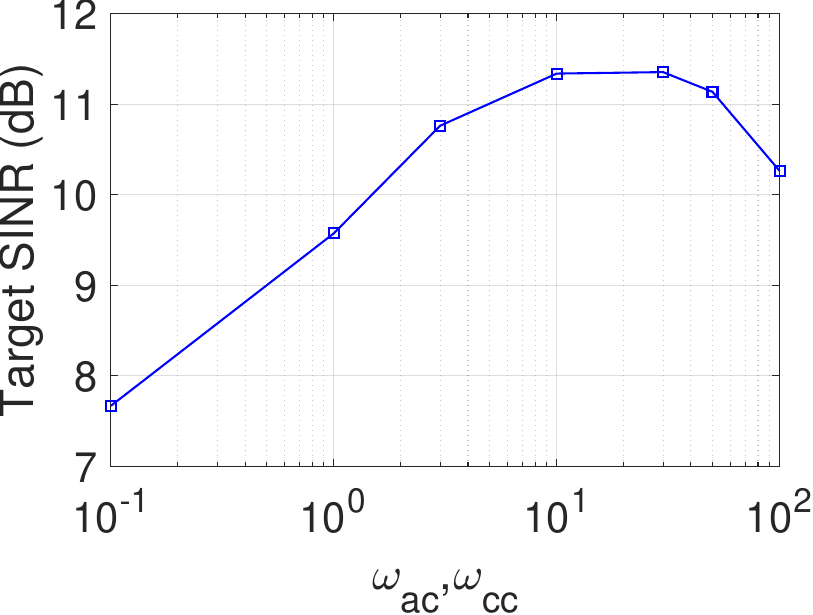}}
            \caption{ \small Scenario 3}
            \label{fig:jammer}
     \end{subfigure}
\caption{\small  Target SINR vs correlation ISL weights ($\omega_{ac},\omega_{cc}$) in three different scenarios.
The target/clutter settings are summarized in Table \ref{table:weight_strategy}.
}
\label{fig:weight}
\end{figure}

\begin{table}[!t]
  \centering
  \caption{Target and clutter setting in Fig \ref{fig:weight}.}
  \label{table:weight_strategy}
  \begin{tabular}{lcc}
    \toprule
    & \textbf{Target} & \textbf{Clutter/Jammer} \\
    \midrule
    Scenario 1 & $(\theta_1,\tau_1,\kappa_1)=(40^{\circ},7,1)$& \begin{tabular}{c}
$(\theta_2,\tau_2,\kappa_2)=(40^{\circ},5,4)$\\
 $(\theta_3,\tau_3,\kappa_3)=(-30^{\circ},8,4)$
\end{tabular}
\\ \midrule
Scenario 2  & $(\theta_1,\tau_1,\kappa_1)=(40^{\circ},7,1)$ & \begin{tabular}{c}
$(\theta_2,\tau_2,\kappa_2)=(0^{\circ},6,100)$\\
$(\theta_3,\tau_3,\kappa_3)=(-30^{\circ},8,1)$
\end{tabular}   \\ \midrule
    Scenario 3  & $(\theta_1,\tau_1,\kappa_1)=(40^{\circ},7,1)$ & \begin{tabular}{c}
    $(\theta_2,\tau_2,\kappa_2)=(40^{\circ},5,4)$\\
 $(\theta_3,\tau_3,\kappa_3)=(-30^{\circ},8,4)$\\
$(\theta_4,\tau_4,\kappa_4)=(0^{\circ},6,100)$\\
\end{tabular}   \\
    \bottomrule
  \end{tabular}
\end{table}


We now evaluate the impact of the weight parameters by fixing $\omega_{bp}=1$ and sweeping $\omega_{ac}$ and $\omega_{cc}$.
We configure the target parameter $(\theta_1,\tau_1,\kappa_1)=(40^{\circ},7,1)$ and vary the interference geometry across three cases.
Scenario 1 includes two clutter objects in each mainlobe at $\theta_2=-30^{\circ}$ and $\theta_3=40^{\circ}$ with range offsets.  Scenario 2 has only one clutter at $\theta_2=-30^{\circ}$ but adds a strong jammer in the sidelobe region at $\theta_3=0^{\circ}$.
Scenario 3 combines the two clutter objects in the mainlobes and the strong jammer at angle $0^{\circ}$.
The target/clutter settings are summarized in Table \ref{table:weight_strategy}.

Fig. \ref{fig:weight} plots the target SINR for $K=4$ and $\gamma_k=\SI{12}{\decibel}$ under the three interference scenarios.
In Scenario 1, SINR rises as the ISL weight increases from $\SI{8}{\decibel}$ to $\SI{16}{\decibel}$.
This is because the suppressed correlations reduce the interference power.
In Scenario 2, target SINR declines as correlation weights grow.
This is attributed to the increased spatial sidelobe level with the weights, which permits higher interference from the jammer.
Scenario 3 shows a mid-range plateau due to the combined clutter and jammer.
This implies that the balance of the spatial beam pattern and correlations is important in such scenario.

}

Fig. \ref{fig:convergence} compares the convergence properties of two majorizing functions based on the proposed diagonal matrix $\text{diag}(\hat{\textbf{Q}}\textbf{1})$ and the identity matrix $\lambda_{\textbf{Q}}\textbf{I}$ multiplied by the largest eigenvalue, when $K=2$ and $\gamma_k=\SI{6}{\decibel}$.
The proposed majorizer significantly increases the speed of convergence when compared to the largest eigenvalue-based majorizer.
This suggests that the proposed diagonal matrix offers a much tighter gap compared to the maximum eigenvalue reference. 
Moreover, consistent with the theory in \cite{he2022qcqp}, the MM algorithm shows a monotonic decrease in the objective value.

Table \ref{tab:comparison} compares the convergence times of the proposed algorithms and standard ADMM for variable sizes, i.e., $LN_T$.
The standard ADMM baseline uses the same biconvex formulation, but it seeks the critical points using matrix inversions to solve the quadratic subproblems.
We fix the transmit array size to $N_T=8$ and increase the block length $L$ from $L=4$ to $L=128$.
As expected, the convergence time tends to grow with the variable size $LN_T$. 
The LADMM algorithm achieves noticeably faster convergence than the MM algorithm due to low-complexity proximal iterations.
In particular, for $LN_T=1024$, the LADMM algorithm converges more than 300 times faster than the MM algorithm, demonstrating its computational complexity benefit.
Furthermore, the proposed LADMM algorithm converges significantly faster than the standard ADMM baseline, especially at large variable sizes.
This improvement comes from replacing the inversion-based subproblem solutions in ADMM with linearized proximal updates, avoiding the cubic complexity scaling of matrix inversions.

 \begin{figure}
 \centering
    \center{\includegraphics[width=.65\linewidth]{./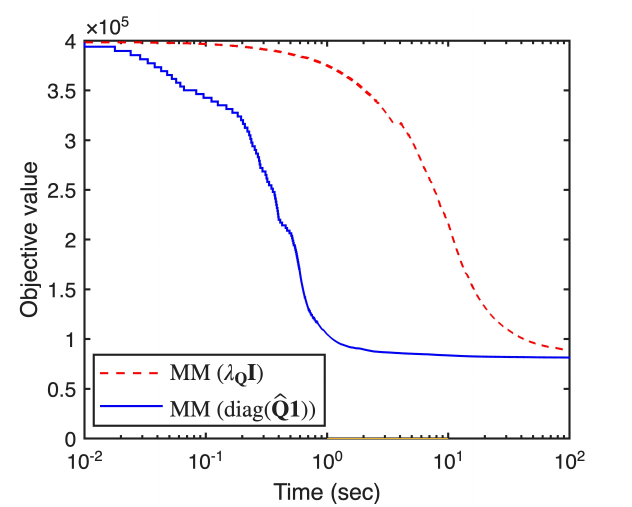}}
    \caption{\small Convergence of the proposed algorithms for $K=2$ and $\gamma_k=\SI{6}{\decibel}$. $\lambda_{\textbf{Q}}\textbf{I}$ and $\text{diag}(\hat{\textbf{Q}}\textbf{1})$ denote the results of the largest eigenvalue-based and proposed majorizers, respectively.}
    \label{fig:convergence}
 \end{figure}



\begin{table}[!t]
        \centering
        \renewcommand{\arraystretch}{1.4}
        \begin{tabular}{|l|c|c|c|c|c|c|}
            \hline
            $LN_T$ & $32$ & $64 $ & $128$ & $256$ & $512$ & $1024$ \\ \hline \hline          
            MM (Ours)& 0.573 & 3.952 & 8.54& 15.32 & 76.09 & 374.59 \\ \hline
            Standard ADMM & 0.124 & 0.73 & 1.83& 8.951& 84.4 & 764.71\\ \hline
            LADMM (Ours)& 0.077 & 0.179 & 0.32& 0.394& 0.83 &1.108\\ \hline             
        \end{tabular}
        \caption{CPU time (sec) comparison of the proposed algorithms and standard ADMM for different variable sizes $LN_T$.}
        \label{tab:comparison}
\end{table}

{

}
{
\section{Conclusion}
This paper investigated the design of constant modulus waveforms for DFRC systems.
We jointly optimized the spatial beam pattern and space-time correlations of the waveform to improve its space-time resolution.
For communications, we employed CI precoding that can expand the feasible region of the waveform variable, thereby enhancing the radar-communication trade-off.
To solve the formulated problem, we developed two algorithms, based on MM  and LADMM techniques, which offer reliable convergence and lower complexity, respectively.
Simulation results showed that the proposed waveforms outperform per-symbol optimized waveforms in terms of detection and imaging resolution, significantly enhancing sensing performance.

}

\appendices

\section{Space-Time Correlation Function}\label{sec:appendix_B}

The vector-form space-time correlation function can be derived using the basic properties of the trace and vectorization operators as \vspace{-2.7mm}
\begin{align*}
\chi_{\tau,q,q'}&=|\textbf{a}^H(\theta_q)\textbf{X}\textbf{J}_{\tau}\textbf{X}^H\textbf{a}(\theta_{q'})|^2 \\
&=|\text{Tr}\left(\textbf{X}^H\textbf{a}(\theta_{q'})\textbf{a}^H(\theta_q)\textbf{X}\textbf{J}_{\tau}\right)|^2 \\
&=|\text{Tr}\left((\textbf{a}(\theta_{q})\textbf{a}^H(\theta_{q'})\textbf{X})^H\textbf{X}\textbf{J}_{\tau}\right)|^2 \\
&=|\text{vec}^H\left(\textbf{a}(\theta_{q})\textbf{a}^H(\theta_{q'})\textbf{X}\right)\text{vec}\left(\textbf{X}\textbf{J}_{\tau}\right)|^2 \\
&=|\left((\textbf{I}_L \otimes \textbf{a}(\theta_{q})\textbf{a}^H(\theta_{q'}))\bm{x}\right)^H(\textbf{J}_{\tau}^T \otimes \textbf{I}_{N_T})\bm{x}|^2 \\
&= |\bm{x}^H \textbf{D}_{\tau,q,q'}\bm{x}|^2. \vspace{-2mm}
\end{align*}

\section{Proof of Theorem \ref{theorem:nonconvex}} \label{sec:appendix_C}

To show the nonconvexity of the feasible set, we transform the constraints in \eqref{eq:prob_formulation2} into a real-valued constraint as \vspace{-2mm}
\begin{equation}\label{eq:prob_formulation_real}
\begin{aligned}\vspace{-4mm}
 \bar{\textbf{h}}^T_{\ell,m}\bar{\bm{x}}_{\ell} \geq \tilde{\Gamma}_m \ \forall \ell,m, ~\bar{x}_{\ell,n}^2+\bar{x}_{\ell,n+LN_T}^2 = 1 \ \forall \ell,n,\vspace{-2mm}
\end{aligned}
\end{equation}
where $\bar{\bm{x}}_{\ell}=[\Re\{\bm{x}^T_{\ell}\},\Im\{\bm{x}_{\ell}^T\}]^T$, $\bar{\textbf{h}}_{\ell,m}=[\Re\{\tilde{\textbf{h}}^T_{\ell,m}\},\Im\{\tilde{\textbf{h}}^T_{\ell,m}\}]^T$, and $\bar{x}_{\ell,n}$ is the $n$th entry of $\bar{\bm{x}}_{\ell}$.
The feasible region of the constant modulus constraint takes the shape of a unit circle in the $n$th and $(n+LN_T)$th coordinates.
Moreover, the intersection of the linear communication constraints forms a polygon in the same coordinates.
Consequently, the intersection of the feasible sets turns out to be an arc of each circle. Thus, the feasible set is nonconvex, which proves the problem \eqref{eq:prob_formulation2} is nonconvex.
and NP-hard.

\section{Gradient Computation for LADMM}\label{sec:appendix:gradient}

The gradients for the beam pattern cost with respect to $\bm{x}$ and $\bm{v}$ are, respectively, given by \vspace{-2mm}
\begin{equation}\label{eq:grad_bp_x}
\begin{aligned}
    \nabla_{\bm{x}} g_{bp}(\bm{x}^{(i)},\bm{v}^{(i)})
&= 2\sum_{u=1}^{U}
c(\bm{x}^{(i)},\bm{v}^{(i)})\textbf{A}(\theta_u)\bm{v}^{(i)} \\
\nabla_{\bm{v}} g_{bp}(\bm{x}^{(i+1)},\bm{v}^{(i)})
&= 2\sum_{u=1}^{U}c(\bm{x}^{(i+1)},\bm{v}^{(i)})
\textbf{A}(\theta_u)\bm{x}^{(i)}
\end{aligned}
\end{equation}
where $c_u(\bm{x},\bm{v})=\Big(\bm{x}^H\textbf{A}(\theta_u)\bm{v}-\alpha^* G_{d,u}\Big)^{*}$ and $\alpha^*$ is the optimal scaling factor.
Note we treat $\alpha^*$ as a constant via the envelope theorem.
Let $\textbf{M}\triangleq[\textbf{a}(\theta_1),\dots,\textbf{a}(\theta_U)]\in\mathbb{C}^{N_T\times U}$ and define $\textbf{G}^{(i)} \triangleq \textbf{M}^H\mathbf{X}^{(i)}\in\mathbb{C}^{U\times L}$ and $\textbf{T}^{(i)} \triangleq \textbf{M}^H\mathbf{V}^{(i)}\in\mathbb{C}^{U\times L}$ where $\textbf{V}=\text{mat}(\bm{v})$.
Then $\bm{x}^H \textbf{A}_u\bm{v}=\sum_{\ell=1}^{L}{[\textbf{G}]^*_{u,\ell}}[\textbf{T}]_{u,\ell}$ and the matrix-form gradient can be written as
\begin{equation}
\begin{aligned}
    \nabla_{\mathbf{X}} g_{bp}(\bm{x}^{(i)},\bm{v}^{(i)})
&= 2\,\textbf{M}\Big(\mathbf{f}^{(i)}\mathbf{1}_L^T\odot \textbf{T}^{(i)}\Big) \\
\nabla_{\mathbf{V}} g_{bp}(\bm{x}^{(i+1)},\bm{v}^{(i)})
&= 2\,\textbf{M}\Big(\mathbf{q}^{(i)}\mathbf{1}_L^T\odot \textbf{G}^{(i)}\Big),
\end{aligned}
\end{equation}
where $\mathbf{f}^{(i)}\in\mathbb{C}^{U}$ has entries $f_u \triangleq c_u(\bm{x}^{(i)},\bm{v}^{(i)})$ and $\mathbf{q}^{(i)}\in\mathbb{C}^{U}$ has entries $q_u \triangleq c_u(\bm{x}^{(i+1)},\bm{v}^{(i)})$.
Finally, the matrix-form gradients can be converted to vector-form gradients as $\nabla_{\bm{x}} g_{bp}(\bm{x}^{(i)},\bm{v}^{(i)})=\mathrm{vec}\big(\nabla_{\mathbf{X}} g_{bp}(\bm{x}^{(i)},\bm{v}^{(i)})\big)$ and $\nabla_{\bm{v}} g_{bp}(\bm{x}^{(i)},\bm{v}^{(i)})=\mathrm{vec}\big(\nabla_{\mathbf{V}} g_{bp}(\bm{x}^{(i)},\bm{v}^{(i)})\big)$.


The gradient of the autocorrelation ISL with respect to $\bm{x}$ can be computed using the chain rule as
\begin{equation}
\begin{aligned}
    &\nabla_{\bm{x}} g_{ac}(\bm{x}^{(i)},\bm{v}^{(i)})
    =2 \sum_{q=1}^Q \sum_{\tau \in \mathcal{P}\setminus \{0\}} 
    \textbf{D}_{\tau,q,q} \bm{v}^{(i)} \underbrace{(\bm{v}^{(i)})^H\textbf{D}^H_{\tau,q,q}\bm{x}^{(i)}}
    \\
    &= 2 \sum_{q=1}^Q \text{vec}\left( \textbf{a}(\theta_q)  \sum_{\tau\in \mathcal{P}\setminus\{0\}} r^*_{\tau,q,q}(\bm{x}^{(i)},\bm{v}^{(i)}) \tilde{\bm{v}}_q^{(i)}\textbf{J}_{\tau}  \right)\\
     &= 2 \sum_{q=1}^Q \text{vec}\left( \textbf{a}(\theta_q)\left(\tilde{\textbf{r}}^{ac}_{q}(\bm{x}^{(i)},\bm{v}^{(i)})^*\circledast \tilde{\bm{v}}_q^{(i)}\right)  \right).
\end{aligned}
\end{equation}
where 
$r_{\tau,q,q}(\bm{x},\bm{v})=\bm{x}^H\textbf{D}_{\tau,q,q} \bm{v}$,
$\tilde{\bm{v}}_q^{(i)}=\textbf{a}^H(\theta_q)\textbf{V}^{(i)}$ is the beam-domain sequence at angle $\theta_q$ and $\mathcal{P}=\{-P+1,\dots,P-1\}$ is the correlation suppression window.
The masked autocorrelation vector $\tilde{\textbf{r}}^{ac}_{q}(\bm{x},\bm{v}) \in \mathbb{C}^{2L-1}$ is defined as $\tilde r^{ac}_{q}[\tau]= r_{\tau,q,q}(\bm x,\bm v)\, \mathbbm{1}_{\{\tau\in\mathcal P\setminus\{0\}\}}.
$
Similarly, the gradient for the cross-correlation ISL can be expressed as
\begin{equation}
\begin{aligned}
    &\nabla_{\bm{x}} g_{cc}(\bm{x}^{(i)},\bm{v}^{(i)}) \\
    &= 2  \sum_{q=1}^Q \sum_{\substack{q'=1\\ q'\neq q}}^Q\text{vec}\left( \textbf{a}(\theta_q)\left(\tilde{\textbf{r}}_{q,q'}^{cc}(\bm{x}^{(i)},\bm{v}^{(i)})^* \circledast \tilde{\bm{v}}_{q'}^{(i)}\right)  \right)
\end{aligned}
\end{equation}
where the vector $\tilde{\textbf{r}}^{cc}_{q,q'}(\bm{x},\bm{v})^* \in \mathbb{C}^{2L-1}$ is defined as $\tilde{{r}}^{cc}_{q,q'}[\tau]=r_{\tau,q,q}(\bm x,\bm v)\,\mathbbm{1}_{\{\tau\in\mathcal P\}}.$ 
The linear convolution operations can be accelerated via FFT.
Similarly, the gradients for the ISLs w.r.t. $\bm{v}$ are given by
\begin{equation}
    \begin{aligned}
    &\nabla_{\bm{v}} g_{ac}(\bm{x}^{(i+1)},\bm{v}^{(i)}) \\
    &= 2 \sum_{q=1}^Q \text{vec}\left( \textbf{a}(\theta_q)(\tilde{\textbf{r}}^{ac}_{q}\left(\bm{x}^{(i+1)},\bm{v}^{(i)}\right)\circledast \tilde{\bm{x}}_q^{(i+1)})  \right),\\
        &\nabla_{\bm{v}} g_{cc}(\bm{x}^{(i+1)},\bm{v}^{(i)}) \\
        &= 2 \sum_{q=1}^Q \sum_{\substack{q'=1\\ q'\neq q}}^Q \text{vec}\left( \textbf{a}(\theta_q)\left(\tilde{\textbf{r}}^{cc}_{q,q'}(\bm{x}^{(i+1)},\bm{v}^{(i)}) \circledast \tilde{\bm{x}}^{(i+1)}_{q'}\right)  \right),
    \end{aligned}
\end{equation}
where $\tilde{\bm{x}}_q^{(i+1)}=\textbf{a}^H(\theta_q)\textbf{X}^{(i+1)}$.

\color{black}
\bibliographystyle{IEEEtran}
\bibliography{IEEEabrv,references}

\end{document}